\documentclass[10pt]{article} 
\usepackage{ascmac}

\usepackage[preprint]{tmlr}

\usepackage[T1]{fontenc}
\usepackage[font=small,labelfont=bf,tableposition=top]{caption}

\DeclareCaptionLabelFormat{andtable}{#1~#2  \&  \tablename~\thetable}


\usepackage[hyperfootnotes=false]{hyperref}
\usepackage{url}
\newcommand\blfootnote[1]{%
	\begingroup
	\renewcommand\thefootnote{}\footnote{#1}%
	\addtocounter{footnote}{-1}%
	\endgroup
}

\title{Dynamic Structure Estimation from Bandit Feedback using Nonvanishing Exponential Sums}


\author{\name Motoya Ohnishi* \email mohnishi@cs.washington.edu \\
      \addr Paul G. Allen School of Computer Science \& Engineering\\
      University of Washington
      \AND
      \name Isao Ishikawa* \email ishikawa.isao.zx@ehime-u.ac.jp \\
      \addr Ehime University \\ RIKEN Center for Advanced Intelligence Project
      \AND
      \name Yuko Kuroki \email yuko.kuroki@centai.eu\\
      \addr CENTAI Institute
      \AND
      \name Masahiro Ikeda \email masahiro.ikeda@riken.jp\\
      \addr RIKEN Center for Advanced Intelligence Project \\ Keio University}


\usepackage{amsmath, amsfonts, amsthm, amssymb, graphicx}
\usepackage{mathtools,verbatim}
\usepackage{algorithm, algpseudocode}
\usepackage{xcolor}         

\usepackage{booktabs}
\usepackage{mathrsfs}

\numberwithin{equation}{section}

\renewcommand{\hbar}{\tau}



\DeclarePairedDelimiter\ceil{\lceil}{\rceil}
\DeclarePairedDelimiter\floor{\lfloor}{\rfloor}

{
 \theoremstyle{plain}
      \newtheorem{asm}{Assumption}
}
\newtheorem{theorem}{Theorem}[section]

\newtheorem{lemma}[theorem]{Lemma}
\newtheorem{corollary}[theorem]{Corollary}

\newtheorem{proposition}[theorem]{Proposition}

\theoremstyle{definition}
\newtheorem{definition}{Definition}[section]
\newtheorem{example}{Example}[section]
\newtheorem{remark}[theorem]{Remark}


\renewcommand{\Pr}{\mathrm{Pr}}
\newcommand{\Exp}{\mathbb{E}}
\newcommand{\Var}{\mathrm{Var}}


\newcommand{\inpro}[1]{\left<#1\right>}


\renewcommand{\Im}{\mathfrak{Im }}
\renewcommand{\Re}{\mathfrak{Re }}
\newcommand{\N}{\mathbb{N}}
\newcommand{\Z}{\mathbb{Z}}
\newcommand{\C}{\mathbb{C}}
\newcommand{\R}{\mathbb{R}}

\newcommand{\Q}{\mathbb{Q}}





\begin{document}

\maketitle
\blfootnote{* Equal contributions by MO and II.  Project page: \url{https://sites.google.com/view/dsefbf/} }
\begin{abstract}
This work tackles the dynamic structure estimation problems for periodically behaved discrete dynamical system in the Euclidean space.
We assume the observations become sequentially available in a form of bandit feedback contaminated by a sub-Gaussian noise.
Under such fairly general assumptions on the noise distribution, we carefully identify a set of recoverable information of periodic structures.
Our main results are the (computation and sample) efficient algorithms that exploit asymptotic behaviors of exponential sums to effectively average out the noise effect while preventing the information to be estimated from vanishing.
In particular, the novel use of the Weyl sum, a variant of exponential sums, allows us to extract spectrum information for linear systems.
We provide sample complexity bounds for our algorithms, and we experimentally validate
our theoretical claims on simulations of toy examples, including Cellular Automata.
\end{abstract}

\section{Introduction}
\label{sec:intro}
System identification has been of great interest in controls, economics, and statistical machine learning (cf. \cite{tsiamis2019finite,tsiamis2020sample,lale2020logarithmic,lee2022improved, kakade2020information, ohnishi2021koopman, mania2020active, simchowitz2020naive, curi2020efficient, hazan2018spectral, simchowitz2019learning, lee2020robust}).
In particular, estimations of periodic information, including eigenstructures for linear systems, under noisy and partially observable environments, are essential to a variety of applications such as biological data analysis (e.g.,  \cite{hughes2017guidelines, sokolove1978chi, zielinski2014strengths}; also see \cite{furusawa2012dynamical} for how gene oscillation affects differentiation of cells), earthquake analysis (e.g., \cite{rathje1998simplified, sabetta1996estimation, wolfe2006properties}; see \cite{allen2003potential} for the connections of the frequencies and magnitude of earthquakes), chemical/asteroseismic analysis (e.g., \cite{aerts2018forward}), and communication and information systems (e.g., \cite{couillet2011random, derevyanko2016capacity}), just to name a few.

In this paper, we focus on the periodic structure estimation problem for {\em nearly} periodically behaved discrete dynamical systems (cf. \cite{arnold1995random}; specifically, we mainly study theoretical aspects of recovering structural information under a novel set of model assumptions.
We allow systems that are not {\em exactly} periodic with sequentially available bandit feedback.
Due to the presence of noise and partial observability, our problem setups do not permit the recovery of the full set of period/eigenvalues information in general; as such we particularly ask the following question:
{\it what subset of information on dynamic structures can be statistically efficiently estimated?}
This work successfully answers this question by identifying and mathematically defining recoverable information, and proposes algorithms for efficiently extracting such information.

The technical novelty of our approach is highlighted by the careful adoption of the asymptotic bounds on the exponential sums that effectively cancel out noise effects while preserving the information to be estimated.
When the dynamics is driven by a linear system, the use of the Weyl sum \cite{weyl}, a variant of exponential sums, enables us to extract more detailed information.
To our knowledge, this is the first attempt of employing asymptotic results of the Weyl sum for statistical estimation problems, and further studies on the relations between statistical estimation theory and exponential sums (or even other number theoretical results) are of independent interests.

With this summary of our work in place, we present our dynamical system model below followed by a brief introduction of the properties of the exponential sums used in this work.
\\

\noindent\textbf{Dynamic structure in bandit feedback.} We define $D \subset \mathbb{R}^d$ as a (finite or infinite) collection of arms to be pulled.
Let $(\eta_t)_{t=1}^\infty$ be a noise sequence.
Let $\Theta \subset \mathbb{R}^d$ be a set of latent parameters.
We assume that there exists a {\em dynamical system} $f$ on $\Theta$, equivalently, a map $f: \Theta \rightarrow \Theta$.
At each time step $t \in \{1,2,\ldots\}$, a learner pulls an arm $x_t \in D$ and observes a reward
\begin{align}
r_t(x_t) &:= f^t(\theta)^{\top} x_t + \eta_t,\nonumber
\end{align}
for some $\theta \in \Theta$.
In other words, the hidden parameters for the rewards may vary over time but follow only a rule $f$ with initial value $\theta$.  The function $r_t$ could be viewed as the specific instance of partial observation (cf. \cite{ljung2010perspectives}).
\\

\noindent\textbf{Brief overview of the properties of the exponential sums used in this work.}
Exponential sums, also known as trigonometric sums, have developed as a significant area of study in number theory, employing various methods from analytical and algebraic number theory (see \citet{ArkhipovChubarikovKaratsuba+2004} for an overview). An exponential sum consists of a finite sum of complex numbers, each with an absolute value of one, and its absolute value can trivially be bounded by the number of terms in the sum.  However, due to the cancellation among terms, nontrivial upper and lower bounds can sometimes be established.
Several classes of exponential sums with such nontrivial bounds are known, and in this study, we apply bounds from a class known as Weyl sums to extract information from dynamical systems using bandit feedback.  In the mathematical community, these bounds are valuable not only in themselves but also for applications in fields like analysis within mathematics \citep{Katz1990}. 
This research represents an application of these bounds in the context of machine learning (statistical learning/estimation problem) and is cast as one of the first attempts to open the applications of number theoretic results to learning theory.  

In fact, even the standard results utilized in discrete Fourier transform can be leveraged, with their (non)asymptotic properties, to provably handle noisy observations to extract (nearly) periodic information.  This process is illustrated in Figure \ref{fig:exponentialsum} (the one for period estimation); where we show that one can effectively average out the noise through the exponential sum (which is denoted by $\mathcal{R}$ in Figure \ref{fig:exponentialsum}) while preventing the correct period estimation from vanishing.
By multiplying the (scalar) observations by certain exponential value, the bounds of the Weyl sum (which is denoted by $\mathscr{W}$ in Figure \ref{fig:exponentialsum}) can be applied to guarantee the survival of desirable target eigenvalue information; this is illustrated in Figure \ref{fig:exponentialsum} (the one for eigenstructure estimation).
\begin{figure}[h]
	\begin{center}
		\includegraphics[clip,width=0.8\textwidth]{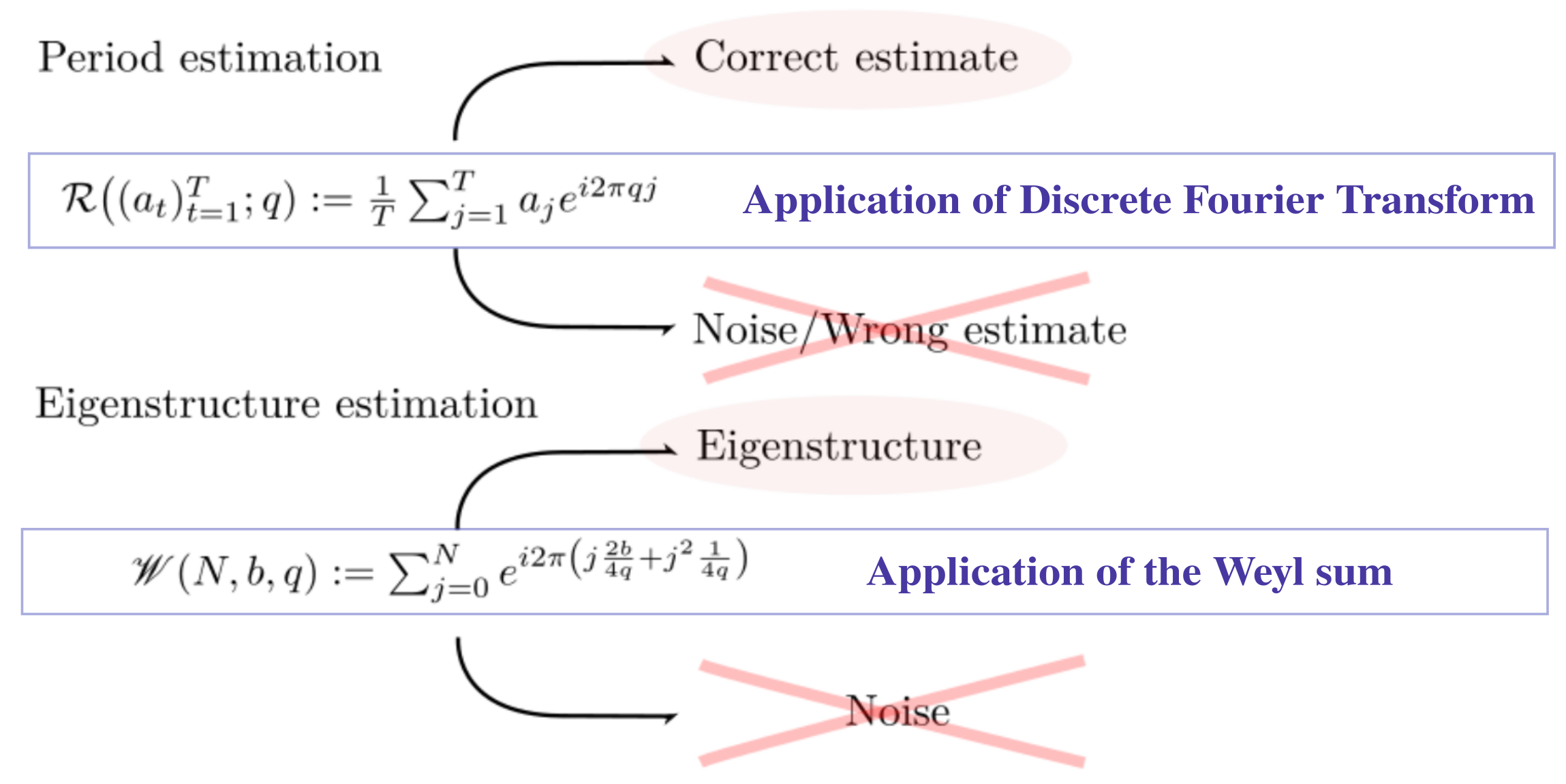}
		\caption{Overview of how the exponential sum techniques are used in the work.  For (nearly) period estimation problem, it is an application of discrete Fourier transform in our statistical settings, which ensures that the correct estimate remains while noise effect or wrong estimates are properly suppressed.  When the system follows linear dynamics in addition to the (nearly) periodicity, our application of the Weyl sum, a variant of exponential sums, preserves some set of the eigenstructure information.}
		\label{fig:exponentialsum}
	\end{center}
\end{figure}

\noindent\textbf{Our contributions.} The contributions of this work are three folds: First, we mathematically identify and define a recoverable set of periodic/eigenvalues information when the observations are available in a form of bandit feedback.  The feedback is contaminated by a sub-Gaussian noise, which is more general than those usually considered in system identification work.  Second, we present provably correct algorithms for efficiently estimating such information; this constitutes the first attempt of adopting asymptotic results on the Weyl sum.  Lastly, we implemented our algorithms for toy and simulated examples to experimentally validate our claims.
\\

\noindent\textbf{Summary of technical contributions and challenges.}
Here, we showcase a summary of technical contributions and challenges to be overcome:
\begin{itemize}
	\item Exploiting the exponential sum techniques (including the discrete Fourier transform) studied in number theory community as a {\em filtering} within the context of statistical estimation problems.
	\item Adapt the Weyl sum technique to deal with eigenvalues by extending it to matrix sum.
	\item Defining novel mathematical concepts: {\em (aliquot) nearly period} and $(\theta,k)$-{\em distinct eigenvalues}.
	\item Concurrently applying a set of filterings to statistically identify correct periodic structure (we call this as {\em concurrent application of filterings} in Algorithm \ref{alg:algorithm1_v2}).
	\item System recovery through {\em maintenance of isomorphic structure} in observation (we call it as {\em isomorphicity maintenance lemma}; see Lemma \ref{lem: explicit pinv B0}).
\end{itemize}

\noindent\textbf{Notations.} Throughout this paper, $\R$, $\R_{\geq 0}$, $\N$, $\Z_{>0}$, $\Q$, $\Q_{>0}$, and $\C$ denote the set of the real numbers, the nonnegative real numbers, the natural numbers ($\{0,1,2,\ldots\}$), the positive integers, the rational numbers, the positive rational numbers, and the complex numbers, respectively.
Also, $[T]:=\{1, 2, \ldots T\}$ for $T\in\Z_{> 0}$.
The Euclidean norm is given by $\|x\|_{\R^d}=\sqrt{\inpro{x,x}_{\R^d}}=\sqrt{x^{\top}x}$ for $x\in\R^d$, where $(\cdot)^{\top}$ stands for transposition.
$\|M\|$ and $\|M\|_F$ are the spectral norm and Frobenius norm of a matrix $M$ respectively, and
$\mathscr{I}(M)$ and $\mathscr{N}(M)$ are the image space and the null space of $M$, respectively. 
If $a$ is a divisor of $b$, it is denoted by $a|b$.  The floor and the ceiling of a real number $a$ is denoted by $\floor{a}$ and $\ceil{a}$, respectively.
Finally, the least common multiple and the greatest common divisor of a set $\mathcal{L}$ of positive integers are denoted by ${\rm lcm}(\mathcal{L})$ and ${\rm gcd}(\mathcal{L})$, respectively.
\section{Motivations}\label{section:mot}
As our main body of the work is largely theoretical with a new problem formulation, this section is intended to articulate the motivations behind this work from scientific (theoretical) and practical perspectives followed by the combined aspects for our formulation.

\subsection{Scientific (theoretical) motivation}
When dealing with noisy observation, it is advised to use concentrations of measure to extract the true target value by averaging out the noise effect over sufficiently many samples.  On the other hand, when the samples are generated by a dynamical system, averaging over the samples will have an effect of averaging over time, which could lose some information on the evolution of dynamics.

\paragraph{Dynamic information recovery from noisy observation.}
Consider for example a fixed point attractor and a limit cycle attractor dynamics as depicted in Figure \ref{fig:mot}; when averaging over samples generated by sufficiently long time horizon, both cases would output values that are close to $(0,0)$ which ambiguates the dynamical properties.
In particular, properties of long-time average of the dynamics have been studied in ergodic theory (cf. \citet{walters2000introduction}), and studying dynamical systems from both time and measure is of independent interest.

\begin{figure}[t]
	\begin{center}
		\includegraphics[clip,width=\textwidth]{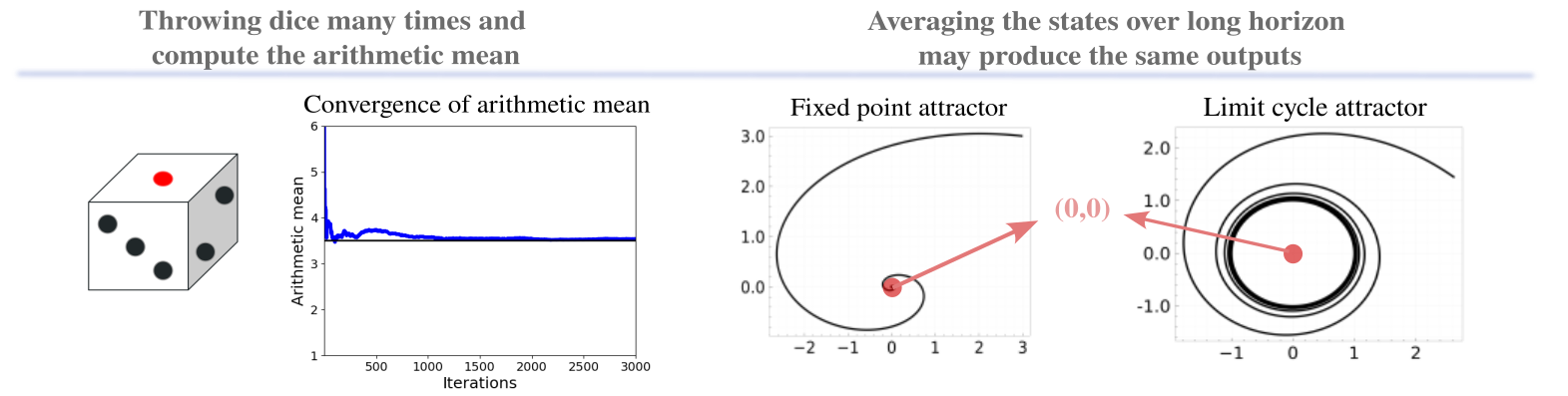}
	\end{center}
	\caption{Illustration of loss of dynamic information through concentration of measure.  Left: An example of the law of large numbers showing the sample mean of the numbers given by throwing dice many times will converge towards the expected value as the number of throws increases.  Right: An intuitive understanding of how the concentration of measures or process of averaging out the observations may erase not only the noise effect but also the dynamics information; in this case, the dynamics on the left side is a fixed point attractor and the right one is a limit cycle attractor, both of which may return $(0,0)$ when averaging the states over long time.} 
	\label{fig:mot}
\end{figure}

\paragraph{Dynamic information recovery from scalar observations.}
Also, linear bandit feedback system is a special case of partially observable systems, and specifically considers scalar (noisy) observations.   

For reconstruction of attractor dynamics from noiseless scalar observations, as will be mentioned in Section \ref{section:rel}, Takens' theorem (or delay embedding theorem; \citet{takens1981detecting}) is famous, where an observation function is used to construct the embedding.  It identifies the dynamical system properties that are preserved through the reconstruction as well.

If the noise is also present, the problems become extensively complicated and some attempts on the analysis have been made \citep{casdagli1991state}; which discusses the trade-off between the distortion and estimation errors.

\subsection{Practical motivation}
Practically, periodic information estimation is of great interest as we have mentioned in Section \ref{sec:intro}; in addition, eigenvalue estimation problems are particularly important for analyzing linear systems.

\paragraph{Physical constant estimation.}
Some of the important linear systems include electric circuits and vibration (or oscillating) systems which are typically used to model an effect of earthquakes over buildings.  Eigenvalue estimation problems in such cases reduce to the estimation of physical constants.  Also, they have been studied for channel estimation \citep{van1995channel} as well.

\paragraph{Dynamic mode decomposition.} At the same time, analyzing the {\em modes} of a target dynamical system by the technique named (extended) dynamic mode decomposition ((E)DMD; cf. \citet{tu2013dynamic, kutz2016dynamic, williams2015data}) has been extensively researched.  The results give spectrum information about the dynamics.

\paragraph{Attractor reconstruction.} With Takens' theorem, obtaining dynamic structure information of attractors in a finite-dimensional space from a finite number of scalar observations is useful in practice as well; see \citep{bakarji2022discovering} for an application in deep autoencoder.  Our problems are not dealing with general nonlinear attractor dynamics; however, the use of exponential sums in the estimation process would become a foundation for the future applications for modeling target dynamics.

\paragraph{Communication capacity.} There has been an interest of studying novel paradigms for coding, transmitting, and processing information sent through optical communication systems \citep{turitsyn2017nonlinear}; when encoding information in (nearly) periodic signals to transmit, computing communication capacity that properly takes communication errors into account is of fundamental interest.  If the observation is made as a mix of the outputs of multiple communication channels, the problem becomes relevant to bandit feedback as well.

\paragraph{Bandit feedback -- single-pixel camera.} As widely studied in a context of compressed sensing (cf. \citet{donoho2006compressed, eldar2012compressed, candes2006stable}), reconstruction of the target image from the single-pixel camera (e.g., \citet{duarte2008single}) observations relies on a random observation matrix (or vector) to obtain a single pixel value for each observation.  It is advantageous to concentrate light to one pixel for the cases where the light intensity is very low.  Bandit feedback naturally coincides with a random observation matrix that the user can design.  The observation through a random matrix coupled with compressed sensing has been studied for channel estimation, magnetic resonance imaging, visualization of black holes, and vibration analysis as well.

\paragraph{Bandit feedback -- recommendation system.} Reward maximization problems from bandit feedback have been applied to recommendation system (e.g., \citet{li2010contextual}) for example, where the observations are in scalar form.  Other applications of bandit problems include healthcare, dynamic pricing, dialogue systems and telecommunications.  Although our problem formulation is not meant to maximize any reward, it is seamlessly connected to the bandit problems as described below.

\paragraph{Connection to reward maximization problems.} As we will briefly describe in Appendix \ref{app:application_to_bandit}, thanks to the problem formulations in this work, one could use our estimation algorithms for downstream bandit problems (i.e., reward maximization problem with bandit feedback).  A naive approach is an explore-then-commit type algorithm (cf. \cite{robbins1952some, anscombe1963sequential}) where the ``explore'' stage utilizes our estimation mechanism to identify the dynamics structure behind the system parameter.  Devising efficient dynamic structure estimation algorithms for other types of information than the periodicity and eigenstructure under bandit feedback scenarios may also be an interesting direction of research.

\subsection{Combined aspects}
In our work, we essentially consider periodic (and eigenstructure) information recovery from observation data, which as mentioned is crucial in some application domains.  In particular, our formulation cares about the situation where the observation is made by a bandit feedback which produces a sequence of scalar observations.
We mention that the bandit feedback is essentially just a (noisy) scalar observation where (possibly random) observation vectors are selected by users.

One possible example scenario in which bandit feedback and dynamic structure estimation meet is illustrated in Figure \ref{fig:example} where the target system has some periodicity and the noisy observation is made through a single-pixel camera equipped with a random filter to be used to determine the periodic structure of the target system.

Also, as mentioned, we emphasize again that reconstruction of dynamical systems characteristics such as their attractor dynamics through (noisy, user-selected) scalar observation is an important area of research;
we believe the use of exponential sums in the reconstruction of dynamics opens up further research directions and applications in a similar manner to Takens' theorem (see Section \ref{section:rel} as well).

\begin{figure}[t]
	\begin{center}
		\includegraphics[clip,width=0.9\textwidth]{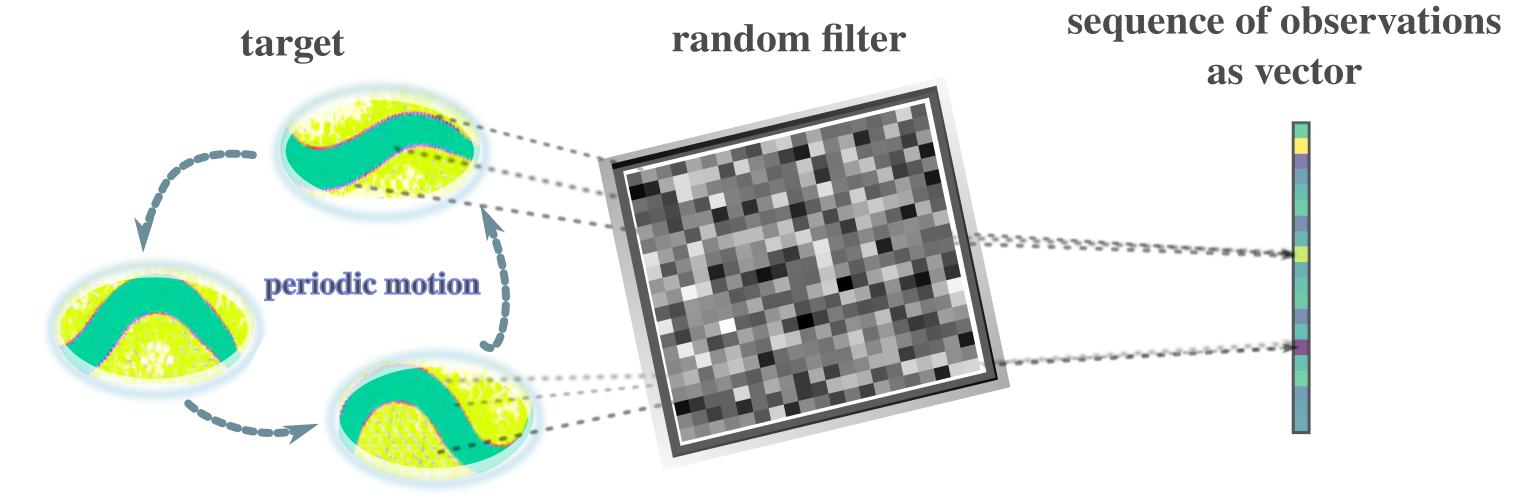}
	\end{center}
	\caption{Illustration of the example practical case where our techniques may be applied; in this example, the target object is following nearly periodic motion, and requires concentration of light to be captured.  A random filter could be employed and we observe a single value sequentially to predict its periodic information.} 
	\label{fig:example}
\end{figure}

\section{Related work}\label{section:rel}
As we have seen in Section \ref{section:mot}, our broad motivation stems from the reconstruction of dynamical system information from noisy and partial observations.
The most relevant lines of work to this problem are (1) that of Takens \citep{takens1981detecting} and its extensions, and (2) system identifications for partially observed dynamical systems.

While the former studies more general nonlinear attractor dynamics from a sequence of scalar observations made by certain observation function having desirable properties, it is originally for the noiseless case.  Although there exist some attempts on extending it to noisy situations (e.g., \citet{casdagli1991state}), provable estimation of some dynamic structure information with sample complexity guarantees (or nonasymptotic result) under fairly general noise case is not elaborated.  On the other hand, our work may be viewed as a special instance of system identifications for partially observed dynamical systems.  Existing works for sample complexity analysis of partially observed linear systems (e.g.,  \citet{menda2020scalable, hazan2018spectral, simchowitz2019learning, lee2020robust, tsiamis2019finite,tsiamis2020sample,lale2020logarithmic,lee2022improved, adams2021identification, bhouri2021gaussian, ouala2020learning,uy2021operator, subramanian2022approximate, bennett2021proximal, lee2020stochastic}) typically consider additive Gaussian noise and make controllability and/or observability assumptions (for autonomous case, transition with Gaussian noise with positive definite covariance is required).  While \citep{mhammedi2020learning} considers nonlinear observation, it still assumes Gaussian noise and controllability.
The work \citep{hazan2018spectral} considers adversarial noise but with limited budget; we mention its {\em wave-filtering} approach is interesting and our use of exponential sums could also be viewed as {\em filtering}.
Also, the work \citep{simchowitz2019learning} considers control inputs and bounded semi-adversarial noise, which is another set of strong assumptions.

Our work is based on the different set of assumptions and aims at estimating a specific set of information; i.e., while we restrict ourselves to the case of (nearly) periodic systems and allow observations in the form of bandit feedback, our goal is to estimate (nearly) periodic structure of the target system.
In this regard, we are not proposing ``better'' algorithms than the existing lines of work but are considering different problem setups that we believe are well motivated from both theoretical and practical perspectives (see Section \ref{section:mot}).
We mention that there exist many period estimation methods (e.g., \cite{moore2014online, tenneti2015nested}), and in the case of zero noise, this becomes a trivial problem. 
For intuitive understandings of how our work may be placed in the literature, we show the illustration in Figure \ref{fig:related_work}.
\begin{figure}[t]
	\begin{center}
		\includegraphics[clip,width=0.9\textwidth]{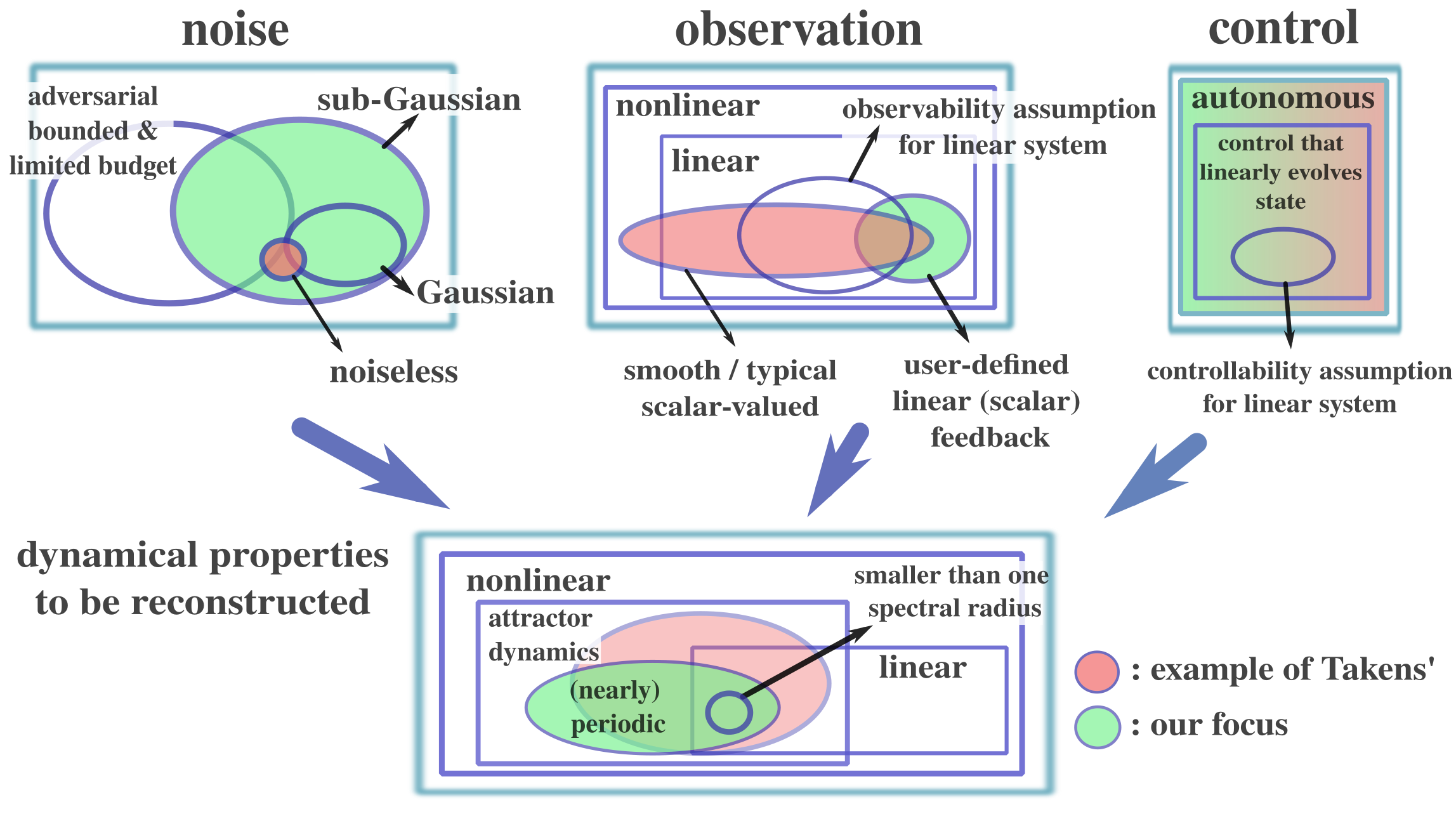}
	\end{center}
	\caption{Illustration of the position of our work as a study of reconstruction of dynamical system properties.  We assume sub-Gaussian noise, bandit feedback (user-defined linear feedback contaminated by noise), autonomous system, and (nearly) periodicity which is the information to be extracted.  To our knowledge, our assumptions do not subsume or are subsumed by those of other work as a problem of provably estimating periodic structure.  In comparison to existing system identification techniques for partially observed dynamical systems, we {\em do not} assume Gaussian (or bounded) noise, limited budget for adversarial noise, controllability, observability or noisy transition, while we assume bandit feedback to identify a recoverable set of dynamic structure information.} 
	\label{fig:related_work}
\end{figure}

We also mention that our model of bandit feedback is commonly studied within stochastic linear bandit literature (cf. \citet{abe1999associative, auer2002using, dani2008stochastic, abbasi2011improved}).
Also, as we consider the dynamically changing system states (or {\it reward vectors}), it is closely related to adversarial bandit problems (e.g., \citet{bubeck2012regret, hazan2016introduction}).  
Recently, some studies on non-stationary rewards have been made (cf. \citet{auer2019achieving,besbes2014stochastic,chen2019new,cheung2022hedging,luo2018efficient,russac2019weighted,trovo2020sliding,wu2018learning}) although they do not deal with periodically behaved dynamical system properly (see discussions in \citep{cai2021periodic} as well).
For discrete action settings, \citep{oh2019periodic} proposed the periodic bandit, which aims at minimizing the total regret.
Also, if the period is known, Gaussian process bandit for periodic reward functions was proposed \citep{cai2021periodic} under Gaussian noise assumption.
While our results could be extended to the regret minimization problems by employing our algorithms for estimating the periodic information before committing to arms in a certain way, we emphasize that our primary goal is to estimate such periodic information in provably efficient ways.  We thus mention that our work is orthogonal to the recent studies on regret minimization problems for non-stationary environments (or in particular, periodic/seasonal environments).
Refer to \citep{lattimore2020bandit} for bandit algorithms that are not covered here.

\section{Problem definition}\label{section:main}
In this section, we describe our problem setting.  In particular, we introduce some definitions on the properties of dynamical system.
\subsection{Nearly periodic sequence}
First, we define a general notion of nearly periodic sequence:
\begin{definition}[Nearly periodic sequence]
    Let $(\mathcal{X}, d)$ be a metric space.
    Let $\mu\geq0$ and let $L\in \mathbb{Z}_{>0}$.
    We say a sequence $(y_t)_{t=1}^\infty \subset \mathcal{X}$ is {\em $\mu$-nearly periodic} of length $L$ if $d(y_{s+Lt}, y_s) < \mu$ for any $s, t \in \mathbb{Z}_{>0}$.
    We also call $L$ the length of the $\mu$-nearly period.
\end{definition}
Intuitively, there exist $L$ balls of diameter $\mu$ in $\mathcal{X}$  and the sequence $y_1, y_2, \dots$ moves in the balls in order if $(y_t)_{t=1}^\infty$ is $\mu$-nearly periodic of length $L$.
Obviously, nearly periodic sequence of length $L$ is also nearly periodic sequence of length $mL$ for any $m \in \mathbb{Z}_{>0}$.
We say a sequence is periodic if it is $0$-nearly periodic.
We introduce a notion of aliquot nearly period to treat estimation problems of period:
\begin{definition}[Aliquot nearly period]
    Let $(\mathcal{X}, d)$ be a metric space.
    Let $\rho>0$ and $\lambda \ge 1$.
    Assume a sequence $\{y_t\}_{t=1}^\infty \subset \mathcal{X}$ is {\em $\mu$-nearly periodic} of length $L$ for some $\mu\geq0$ and $L \in \mathbb{Z}_{>0}$.
    A positive integer $\ell$ is a {\em $(\rho, \lambda)$-aliquot nearly period} ({\em $(\rho, \lambda)$-anp}) of $(y_t)_{t=1}^\infty$ if $\ell | L$ and the sequence $(y_t)_{t=1}^\infty$ is $(\rho + 2\lambda\mu)$-nearly periodic.
\end{definition}
We may identify the $(\rho,\lambda)$-anp with a $2\lambda\mu$-nearly period under an error margin $\rho$.
When we estimate the length $L$ of the nearly period of unknown sequence $(y_t)_t$, we sometimes cannot determine the $L$ itself, but an aliquot nearly period. 

\begin{example}
A trajectory of finite dynamical system is always periodic and it is the most simple but important example of (nearly) periodic sequence.
We also emphasize that if we know the upper bound of the number of underlying space, the period is bounded above by the upper bound as well.
These facts are summarized in Proposition \ref{prop: finite dynamical system}.  
The cellular automata on finite cells is a specific example of finite dynamical systems.
We will treat LifeGame~\cite{conway1970game}, a special cellular automata, in our simulation experiment (see Section \ref{sec: similation experiments}).
\end{example}
\begin{proposition}\label{prop: finite dynamical system}
Let $f: \Theta \to \Theta$ be a map on a set $\Theta$.
If $|\Theta|< \infty$, then for any $t \ge |\Theta|$ and $\theta \in \Theta$, $f^{t+L}(\theta) = f^t(\theta)$ for some $1\le L \le |\Theta|$.
\end{proposition}
\begin{proof}
    Since $|\{\theta, f(\theta), \dots, f^{|\Theta|}(\theta)\}| > |\Theta|$, there exist $0\le i< j \le |\Theta|$ such that $f^i(\theta) = f^j(\theta)$ by the pigeon hole principal.
    Thus, $f^t(\theta)=f^{j-i+t}(\theta)$ for all $t \ge |\Theta|$.
\end{proof}

If a linear dynamical system generates a nearly periodic sequence, we can show the linear system has a specific structure as follows:
\begin{proposition}
	\label{prop:mulinear}
Let $M: \mathbb{R}^d \to \mathbb{R}^d$ be a linear map.
Let $\mathbb{C}^d = \oplus_{\alpha}V_\alpha$ be the decomposition via generalized eigenspaces of $M$, where $\alpha$ runs over the eigenvalues of $M$ and $V_\alpha := \mathscr{N}((\alpha I - M)^d)$.
Assume that there exists $\mu\geq0$, for any $\theta \in \mathbb{R}^d$, $(M^t\theta)_{t=t_0}^\infty$ is $\mu$-nearly periodic for some $t_0 \in \mathbb{N}$.
Let $\theta = \sum_\alpha \theta_\alpha \in \oplus_\alpha V_\alpha$.
Then, each eigenvalue $\alpha$ such that $\theta_\alpha \neq 0$ satisfies $|\alpha| \le 1$, in addition, if  $|\alpha|=1$ and $\theta_\alpha \neq 0$,
$M\theta_\alpha = \alpha\theta_\alpha$.
\end{proposition}
\begin{proof}
 We note that $\{M^tv\}_{t \ge 0}$ is bounded for any $v \in \R^d$ by the assumption on $M$.
Thus, $M$ cannot have an eigenvalue of magnitude greater than $1$.
We show that $\alpha$ is in the form of $\alpha = e^{i2\pi q}$ for some $q \in \mathbb{Q}$ if $|\alpha|=1$.
Suppose that $\alpha=e^{i2\pi \gamma}$ for an irrational number $\gamma \in \mathbb{R}$.
Then, for an eigenvector $w$ for $\theta_\alpha$ with $\|w\|_{\R^d}>\mu$, $\{M^tw\}_{t>t_0}$ cannot become a $\mu$-periodic sequence.
Thus, we conclude $\alpha = e^{i2\pi q}$ for some $q \in \mathbb{Q}$.
Next, we show  $M\theta_\alpha = \alpha\theta_\alpha$ if  $|\alpha|=1$ and $\theta_\alpha \neq 0$.
Suppose $(M-\alpha I)\theta_\alpha \neq 0$.
Since $(M-\alpha I)^d\theta_\alpha=0$, there exists $1\leq d'<d$ such that $(M-\alpha I)^{d'+1}\theta_\alpha=0$ but $(M-\alpha I)^{d'}\theta_\alpha\neq 0$.
Let $w' := (M-\alpha I)^{d'-1}\theta_\alpha$. 
Then, we see that $(M-\alpha I)^2w' =0$ but $(M - \alpha I)w'\neq 0$.
By direct computation, we see that
\[\|M^t w'\|_{\R^d} = \|(M-\alpha I + \alpha I)^tw'\|_{\R^d} \ge t\|(M-\alpha I)w'\|_{\R^d} - \|w'\|_{\R^d}.  \]
Thus, we have $\|M^t w'\|_{\R^d} \to \infty$ as $t \to \infty$, which contradicts the fact that $\{M^tw'\}_{t \ge 0}$ is a bounded sequence.
The last statement is obvious.
\end{proof}

Let $W$ be a linear subspace of $\mathbb{C}^d$ generated by the trajectory $\{M\theta, M^2\theta, \dots\}$ and denote $\dim(W)$ by $d_0$.
Note that restriction of $M$ to $W$ induces a linear map from $W$ to $W$.
We denote by $M_\theta$ the induced linear map from $W$ to $W$.
Let $W = \oplus_{\alpha \in \Lambda}W_\alpha$ be the decomposition via the generalized eigenspaces of $M_\theta$, where $\Lambda$ is the set of eigenvalues of $M_\theta$ and $W_\alpha := \mathscr{N}((\alpha I - M_\theta)^{d_0})$.
We define
\begin{align*}
    W_{=1} := \oplus_{|\alpha|=1} W_\alpha,\\
    W_{<1} := \oplus_{|\alpha|<1} W_\alpha.
\end{align*}
Then, we have the following statement as a corollary of Proposition \ref{prop:mulinear}:
\begin{corollary}\label{cor: decomposition of M}
There exist linear maps $M_1, M_{<1}: W \to W$ such that
\begin{enumerate}
    \item $M_\theta = M_1 + M_{<1}$, \label{sum of M}
    \item $M_1M_{<1} = M_{<1}M_1=O$, \label{commutativity}
    \item $M_1$ is diagonalizable and any eigenvalue of $M_1$ is of magnitude $1$, and
    \item any eigenvalue of $M_{<1}$ is of magnitude smaller than $1$.
\end{enumerate}
\end{corollary}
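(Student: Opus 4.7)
The plan is to exploit the canonical splitting $W = W_{=1} \oplus W_{<1}$ noted just before the corollary, together with the refinement in Proposition \ref{prop:mulinear}. Each generalized eigenspace $W_\alpha$ of $M_\theta$ is $M_\theta$-invariant, so the two direct summands $W_{=1}$ and $W_{<1}$ are $M_\theta$-invariant as well. Let $\pi_{=1}$ and $\pi_{<1}$ denote the projections onto $W_{=1}$ and $W_{<1}$ associated with the splitting, and set
\begin{align*}
M_1 := \pi_{=1} \circ M_\theta, \qquad M_{<1} := \pi_{<1} \circ M_\theta.
\end{align*}
By invariance, $M_1$ agrees with $M_\theta$ on $W_{=1}$ and vanishes on $W_{<1}$, and symmetrically for $M_{<1}$.

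With this construction, items 1 and 2 are immediate: $M_1 + M_{<1} = (\pi_{=1} + \pi_{<1}) \circ M_\theta = M_\theta$, and since $\im M_{<1} \subset W_{<1} \subset \ker M_1$ and $\im M_1 \subset W_{=1} \subset \ker M_{<1}$, we have $M_1 M_{<1} = M_{<1} M_1 = O$.

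For item 3, I would invoke Proposition \ref{prop:mulinear}: whenever $\alpha$ is an eigenvalue of $M$ with $|\alpha| = 1$ and $\theta_\alpha \neq 0$, the component $\theta_\alpha$ is already a genuine eigenvector of $M$, so $M^t \theta_\alpha = \alpha^t \theta_\alpha$ for every $t \ge 1$. Therefore the $\alpha$-contribution to $W$ is the one-dimensional line $\mathbb{C}\theta_\alpha$, on which $M_\theta$ acts by multiplication by $\alpha$. Summing over all magnitude-one eigenvalues, $M_\theta|_{W_{=1}}$ is diagonal in the basis $\{\theta_\alpha\}_{|\alpha|=1,\ \theta_\alpha \neq 0}$, so it is diagonalizable with all of its nonzero eigenvalues of modulus $1$; extended by zero on $W_{<1}$, $M_1$ remains diagonalizable. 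For item 4, $M_{<1}$ acts on $W_{<1}$ as $M_\theta|_{W_{<1}}$ and vanishes on $W_{=1}$, so its spectrum lies in $\spec(M_\theta|_{W_{<1}}) \cup \{0\}$, and by construction of the splitting both $\spec(M_\theta|_{W_{<1}})$ and $\{0\}$ consist of numbers of modulus strictly less than $1$.

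I do not anticipate any serious obstacle. The only point requiring care is the correct reading of Proposition \ref{prop:mulinear}: it supplies not merely the spectral constraint $|\alpha| \le 1$ but the stronger fact that the magnitude-one components of $\theta$ are honest eigenvectors. Without this, $M_\theta|_{W_{=1}}$ could in principle carry nontrivial Jordan blocks and the diagonalizability of $M_1$ required in item 3 would fail.
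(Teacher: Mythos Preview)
Your construction is exactly the paper's: the paper sets $M_1 := iM_\theta p$ with $p:W\to W_{=1}$ the projection and $i:W_{=1}\to W$ the inclusion, which is your $\pi_{=1}\circ M_\theta$ once invariance of $W_{=1}$ is used, and similarly for $M_{<1}$. Your write-up is in fact more detailed than the paper's two-line proof, correctly invoking Proposition~\ref{prop:mulinear} to argue that each $W_\alpha$ with $|\alpha|=1$ is the one-dimensional line $\mathbb{C}\theta_\alpha$ and hence that $M_\theta|_{W_{=1}}$ is diagonalizable; you also rightly note that the extension by zero on $W_{<1}$ introduces $0$ as an eigenvalue of $M_1$, so item~3 should be read as a statement about the nonzero spectrum (the paper uses $M_1$ only through its restriction to $W_{=1}$, so this is harmless).
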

\begin{proof}
Let $p: W \to W_{=1}$ be the projection and let $i: W_{=1}\to W$ be the inclusion map.
We define $M_1 := iM_\theta p$.
We can construct $M_{<1}$ in the similar manner and these matrices are desired ones.
\end{proof}
\begin{example}
Let $G$ be a finite group and let $\rho$ be a finite dimensional representation of $G$, namely, a group homomorphism $\rho: G \to \mathrm{GL}_m(\mathbb{C})$, where $\mathrm{GL}_m(\mathbb{C})$ is the set of complex regular matrices of size $m$.
Fix $g \in G$.
Let $B \in \mathbb{C}^{n\times n}$ be a matrix whose eigenvalues have magnitudes smaller than $1$. 
We define a matrix of size $m+n$ by
\[
M := 
\left(
\begin{array}{cc}
    \rho(g) & 0 \\
     0& B
\end{array}
\right).
\]
Then, $(M^tx)_{t=t_0}^\infty$ is a $\mu$-nearly periodic sequence for any $\mu>0$, $x\in \mathbb{C}^m$, and sufficiently large $t_0$.
Moreover, we know that the length of the nearly period is $|G|$.
We treat the permutation of variables in $\mathbb{R}^d$ in the simulation experiment (see Section \ref{sec: similation experiments}), namely the case where $G$ is the symmetric group $\mathfrak{S}_d$ and $\rho$ is a homomorphism from $G$ to $\mathrm{GL}_d(\mathbb{C})$ defined by $\rho(g)((x_j)_{j=1}^d):=(x_{g(j)})_{j=1}^d$, which is the permutation of variable via $g$.
\end{example}

\subsection{Problem setting}
Here, we state our problem settings.
We use the notation introduced in the previous sections.
First, we summarize our technical assumptions as follows:
\begin{asm}[Conditions on arms] \label{asm: arms}
   The set of arms $D$ contains the unit hypersphere. 
\end{asm}
 \begin{asm}[Assumptions on noise]
 	\label{asm:setnoise}
 	The noise sequence $\{\eta_t\}_{t=1}^{\infty}$ is conditionally $R$-sub-Gaussian ($R\in\R_{\geq0}$), i.e.,
 	given $t$, 
 	\begin{align}
 	\forall \lambda\in\R,~\Exp\left[e^{\lambda\eta_t}\vert \mathcal{F}_{t-1}\right]\leq e^{\frac{\lambda^2R^2}{2}},\nonumber
 	\end{align}
 	and $\Exp[\eta_t\vert \mathcal{F}_{t-1}]=0$, $\Var[\eta_t\vert \mathcal{F}_{t-1}]\leq R^2$,
 	where $\{\mathcal{F}_\tau\}_{\tau\in\N}$ is an ascending family and we assume that $x_1,\ldots,x_{\tau+1},\eta_1,\ldots,\eta_{\tau}$ are measurable with respect to $\mathcal{F}_\tau$.
\end{asm}
\begin{asm}[Assumptions on dynamical systems] \label{asm:dyn sys}
There exists $\mu>0$ such that for any $\theta \in \Theta$, the sequence $(f^t(\theta))_{t=t_0}^\infty$ is $\mu$-nearly periodic of length $L$ for some $t_0 \in \mathbb{N}$.
We denote by $B_\theta$ the radius of the smallest ball containing $\{f^t(\theta)\}_{t=0}^\infty$. 
\end{asm}
\begin{remark}
	\label{rem:exclude_lower}
	Assumption \ref{asm: arms} excludes the lower bound arguments of the minimally required samples for our work since taking sufficiently large vector (arm) makes the noise effect negligible.  Considering more restrictive conditions for discussing lower bounds is out of scope of this work.
\end{remark}
Then, our questions are described as follows:
\begin{itemize}
    \item Can we estimate the length $L$ from the collection of rewards $(r_t(x_t))_{t=1}^{T}$ efficiently ?
    \item If we assume the dynamical system is linear, can we further obtain the eigenvalues of $f$ from a collection of rewards ?
    \item How many samples do we need to provably estimate the length $L$ or eigenvalues of $f$ ?
\end{itemize}
We will answer these questions in the following sections and via simulation experiments.

\section{Algorithms and theory}
\label{sec:algandtheo}
With the above settings in place, we present a (computationally efficient) algorithm for each presented problem, and show its sample complexity for estimating certain information.
\subsection{Period estimation}
Here, we describe an algorithm for period estimation followed by its theoretical analysis.
The overall procedure is summarized in Algorithm \ref{alg:algorithm1_v2}.
Line \ref{line:concurrent} executes {\em concurrent application of filterings}.
\begin{algorithm}[!t]
	\vspace{0.5em}
	\hspace*{\algorithmicindent} \textbf{Input}: Current time $t_0 \in \mathbb{Z}_{>0}$; $T_p$; $\epsilon>0$; $L_{\max}>1$; orthogonal basis $\{\mathbf{u}_1,\dots,\mathbf{u}_d\}$ of $\mathbb{R}^d$ \\
	\hspace*{\algorithmicindent} \textbf{Output}: Estimated length $\hat{L}$
	\begin{algorithmic}[1]	
		\State $\ell \leftarrow 1$; $\beta \leftarrow 1$
		\For{$m = 1, \dots, d$}
		\For{$t = t_0, t_0+1, \ldots, t_0+T_p-1$}
		\State Sample arm $\mathbf{u}_{m}$ and observe $r_t(\mathbf{u}_m)$
		\EndFor
		\While{$\ell \cdot \beta \le L_{\max}$}
		\State $\ell \leftarrow \ell + 1$
		\For{$(s,b)=(0,1), (0,2), \dots, (0,\ell-1),(1,1),(1,2),\dots,(\beta-1,\ell-1)$} \label{line:concurrent}
		\If{$\left|\mathcal{R}\left( (r_{t_0 + s+\beta t}(\mathbf{u}_m))_{t=1}^{\floor{T_p/\beta}}; b/\ell \right) \} \right|^2>\epsilon$}
		\State $\beta \leftarrow \beta \ell$ 
		\State $\ell \leftarrow 1$
		\State Break
		\EndIf
		\EndFor
		\EndWhile
		\State $t_0 \leftarrow t_0 + T_p$
		\EndFor
		\State $\hat{L}\leftarrow\beta$
	\end{algorithmic}
	\caption{Period estimation (DFT)}
	\label{alg:algorithm1_v2}
\end{algorithm}
To analyze the sample complexity of this estimation algorithm, we first introduce an exponential sum that plays a key role:
\begin{definition}
	For a positive rational number $q \in \mathbb{Q}_{>0}$ and $T$ complex numbers $a_1,\dots, a_{T} \in \mathbb{C}$, we define
	\begin{align*}
	\mathcal{R}\big( (a_t)_{t=1}^T; q) &:= \frac{1}{T} \sum_{j=1}^{T} a_{j} e^{i2\pi q j}.
	\end{align*}
\end{definition}
For a $\mu$-nearly periodic sequence $\mathbf{a} := (a_t)_{t=1}^\infty$ of length $L$, we define the supremum of the standard deviations of the $L$ sequential data of $\mathbf{a}$:
\begin{align*}
    \sigma_L(\mathbf{a}) := \sup_{t_0 \ge 1}\sqrt{ \frac{1}{L} \sum_{t=t_0}^{t_0 + L-1} \left|a_t - \frac{1}{L}\sum_{j=t_0}^{t_0 + L-1}a_j\right|^2 }.
\end{align*}
The exponential sum $\mathcal{R}(\cdot; \cdot)$ can extract a divisor of the nearly period of a $\mu$-nearly periodic sequence if $\mu$ is ``sufficiently smaller'' than the variance of the sequence even when the sequence is contaminated by noise; more precisely, we have the following lemma:
\begin{lemma} \label{lem: explicit order}
Let $\mathbf{a} := (a_j)_{j=1}^\infty$ be a $\mu$-nearly periodic sequence of length  $L$.
Then, we have the following statements:
\begin{enumerate}
    \item if $L>1$, then there exists $s \in \mathbb{Z}_{>0}$ with $s<L$ such that
    \begin{align}
         \left| \mathcal{R} \left((a_j)_{j=1}^T ; s/L \right) \right| > \sqrt{\frac{\sigma^2 - 2\mu \sigma}{L}} - \mu - \frac{L \sup_{t \ge 1}|a_t|}{T},
        \label{eq: lower bound}
    \end{align}
    \item if $\beta$ is not a divisor of $L$, then for any  $\alpha \in \mathbb{Z}_{>0}$,
    \begin{align}
        \left|\mathcal{R} \left((a_j)_{j=1}^T ; \alpha/\beta \right)\right| 
            < \mu + \frac{L^2\mathcal{C}_0(\mu + \sup_{t \ge 1}|a_t|)}{T},
        \label{eq: upper bound}
    \end{align}
    where $\mathcal{C}_0 :=1 +  {2}/{\sqrt{2}\pi(3/4)^{\pi^2/6}} = 1.72257196806914...$.    
\end{enumerate}
\end{lemma}
\begin{proof}
As $(a_t)_{t=1}^\infty$ is $\mu$-almost periodic, there exist $(b_t)_{t=1}^\infty$ of period $L$ and $(c_t)_{t=1}^\infty$ with $\sup_{t \ge 1}|c_t| < \mu$ such that $a_t = b_t + c_t$.

First, we prove \eqref{eq: lower bound}.
Let
\begin{align*}
    \tilde{b} := \frac{1}{L}\sum_{t=1}^L b_t,~~\hat{b}_q := \frac{1}{L} \sum_{t=1}^L b_t e^{i2\pi tq}, \quad (q \in \mathbb{Q}).
\end{align*}
We claim that
\begin{align}
    L\cdot \sup\{|\hat{b}_{s/L}|^2\}_{s=1}^{L-1} > \sum_{s=1}^{L-1} |\hat{b}_{s/L}|^2 = \frac{1}{L}\sum_{t=1}^L |b_t - \tilde{b}|^2 \ge \sigma^2 -2\mu \sigma.
    \label{eq: lower bound of var b}
\end{align}
In fact, the first inequality is obvious. 
The equality follows from the Plancherel formula for a finite abelian group (see, for example, \cite[Excercise 6.2]{serre1977}).
As for the last inequality, take arbitrary $t_0 \in \mathbb{Z}_{>0}$ and define $\tilde{a}=L^{-1}\sum_{t = t_0}^{t_0 + L - 1}a_t$ and $\tilde{c}=L^{-1}\sum_{t = t_0}^{t_0 + L - 1}c_t$.
Then, we have
\begin{align*}
    \frac{1}{L}\sum_{t=1}^L |b_t - \tilde{b} |^2
        &\ge \frac{1}{L}\sum_{t=t_0}^{t_0+L-1} |a_t - \tilde{a} |^2 
            +  \frac{1}{L}\sum_{t=t_0}^{t_0+L-1} |c_t - \tilde{c} |^2
                -  \frac{2}{L}\sum_{t=t_0}^{t_0+L-1} |a_t - \tilde{a}| \cdot |c_t - \tilde{c}|\\
        &\ge \frac{1}{L}\sum_{t=t_0}^{t_0+L-1} |a_t - \tilde{a} |^2 
            +  \frac{1}{L}\sum_{t=t_0}^{t_0+L-1} |c_t - \tilde{c} |^2
                \\
            &\qquad-  2\sqrt{\frac{1}{L}\sum_{t=t_0}^{t_0+L-1} |a_t - \tilde{a}|^2} \cdot \sqrt{\frac{1}{L}\sum_{t=t_0}^{t_0+L-1} |c_t - \tilde{c}|^2}\\
        &\ge \frac{1}{L}\sum_{t=t_0}^{t_0+L-1} |a_t - \tilde{a} |^2 -2\mu\sigma.
\end{align*}
Here, we used the Cauchy-Schwartz inequality in the second inequality.
Since $t_0$ is arbitrary, we have \eqref{eq: lower bound of var b}.
Let $s \in \mathrm{argmax}_{s = 1, \dots, L-1}|\hat{b}_{s/L}|^2$ and let $q := s/L$.
Let $T= L T' + \gamma$ for some $T', \gamma \in \N$ with $0 \le \gamma < L$.
Then, we have
\begin{align*}
    \left| \mathcal{R} \left((a_j)_{j=1}^T ; q \right) \right| 
    &\ge \left| \frac{1}{T'} \sum_{t=0}^{T'-1}\left[ \frac{1}{L}\sum_{s=1}^{L} a_{L t + s} e^{i2\pi qs}\right]\right| - \frac{L \sup_{t \ge 1}|a_t|}{T} \\
    &> |\hat{b}_q| - \mu 
         - \frac{L \sup_{t \ge 1}|a_t|}{T}\\
    &\ge \sqrt{\frac{\sigma^2 - 2\mu \sigma}{L}} - \mu
        - \frac{L \sup_{t \ge 1}|a_t|}{T}. 
\end{align*}

Next, we prove \eqref{eq: upper bound}.
As in the same computation as above, we have
\begin{align*}
    \left| \mathcal{R} \left((a_j)_{j=1}^T ; \alpha/\beta \right) \right| 
    &\le \frac{LT'}{T}\left| \frac{1}{T'} \sum_{t=0}^{T'-1}e^{i2\pi \alpha Lt/\beta}\left[ \frac{1}{L}\sum_{s=1}^{L} a_{L t + s} e^{i2\pi \alpha s/\beta}\right]\right| 
        + \frac{L \sup_{t>0}|a_t|}{T} \\
    &< \frac{LT'}{T}\left|\frac{1}{T'} \sum_{t=0}^{T'-1}e^{i2\pi \alpha Lt/\beta} \right|\cdot \left| \frac{1}{L}\sum_{s=1}^{L} b_{s} e^{i2\pi \alpha s/\beta}\right| + \mu 
        + \frac{L \sup_{t>0}|a_t|}{T} \\
    &< \frac{2}{\left|1-e^{i2\pi \alpha L/\beta} \right|}\cdot \left(\mu + \sup_{t \ge 1}|a_t|\right) \cdot \frac{L}{T} 
        + \mu + \frac{L \sup_{t>0}|a_t|}{T}.
\end{align*}
Since $|1-e^{i2\pi a}| \ge \sqrt{2}(1-a^2)^{\pi^2/6}a$ for $a \in (0,1)$ by \cite[Proposition 3.2]{CCB20}, we have 
\begin{align*}
  \left| \mathcal{R} \left((a_j)_{j=1}^T ; \alpha/\beta \right) \right|   &< \left(\mu + \sup_{t \ge 1}|a_t|\right) \frac{2 \beta L}{\sqrt{2}(3/4)^{\pi^2/6}T} 
        + \mu + \frac{L \sup_{t \ge 1}|a_t|}{T} \\ 
    &< \mu + \frac{L\beta \mathcal{C}_0(\mu + \sup_{t \ge 1}|a_t|)}{T}.
\end{align*}
\end{proof}
Then, we obtain the explicit lower bound of the samples for period estimation:
\begin{proposition}\label{prop: key prop for algorithm}
Let $\mathbf{a} := (a_t)_{t=1}^\infty$ be a $\mu$-nearly periodic sequence of length $L$.
Fix a positive integer $L_{\max}>1$ with $L \le L_{\max}$, $\delta \in (0,1)$, $\xi \in (0,1)$, and $\sigma_0 > 0$.
Let $(\eta_t)_{t=1}^\infty$ be a noise sequence satisfying Assumption \ref{asm:setnoise}.
Put $\gamma := 1/(1+\sqrt{4 L_{\max}+1})$ and $\lambda := \mu/(\sigma_0\gamma)$.
We define
\begin{align*}
    \varepsilon &:= \sigma_0\gamma\xi.
\end{align*}
If $\mu/(\gamma\xi) < \sigma_0 \le \sigma_L(\mathbf{a})$, then, for any 
\[ 
T \ge  \frac{8 L_{\max}R^2 \log(4/\delta)}{\sigma_0^2(\xi-\lambda)^2}
        + \frac{36 L_{\max}^{5/2}\sup_{t \ge 1}|a_t|}{\sigma_0(\xi-\lambda)}, 
\]
the set of rational numbers 
\begin{align}
  S_{T, \varepsilon} := \left\{ q \in \mathbb{Q} \cap (0,1) : qL \in \mathbb{Z}_{>0} \text{ and }   \left| \mathcal{R} \left((a_t + \eta_t)_{t=1}^{T} ; q \right) \right| > \varepsilon \right\} \nonumber
\end{align}
is non-empty with probability at least $1-\delta$.
\end{proposition}
If we apply several collections of rewards $(r_t(x_t))_{t=1}^T$ for sufficiently large $T$ indicated in Proposition \ref{prop: key prop for algorithm}, we obtain various divisors of $L$.
Finally, we provide the precise inputs and output of Algorithm \ref{alg:algorithm1_v2} in the following Theorem:
\begin{theorem}
	\label{thm:1}
	Suppose Assumptions \ref{asm: arms}, \ref{asm:setnoise} and \ref{asm:dyn sys} hold.
	Let $r \in [0,1)$ be a non-negative real number, and suppose $\rho>0$ and $\delta\in(0,1)$ are given.  Fix a positive integer $L_{\max}>1$ with $L \le L_{\max}$.
	We define
	\begin{align}
    \varepsilon &:= \frac{\rho}{6\sqrt{d}L_{\max}}. \label{periodeps}
	\end{align}
	Assume that $r\varepsilon \ge \mu$.
	Let $T_p$ be an integer satisfying
	\begin{align}
	T_p \ge  \frac{72dAL_{\max}^2}{\rho^2(1-r)^2}
        + \frac{108B_\theta\sqrt{d}L_{\max}^{3}}{\rho(1-r)},\label{T_p}
	\end{align}
	where $A := R^2 \log(4dL_{\max}^2\log L_{\max}/\delta)$.
	Then, the output $\hat{L}$ of Algorithm \ref{alg:algorithm1_v2} is a $(\rho, \sqrt{d})$-anp of $(f^t(\theta))_{t = t_0}^\infty$ with probability at least $1-\delta$.
\end{theorem}
If $\mu$ is sufficiently small, we may set $r$ as a small positive number, in particular $r=0$ if the system is periodic.

\begin{remark}
	If random arm selection is adopted rather than the orthogonal basis, it may underestimate an error margin on some dimensions, which could lead to the nearly period with much larger error margin than expected; considering failure probability of such a case may potentially produce a variant of our algorithm.
\end{remark}

\subsection{Eigenvalue estimation}
\label{linear system}
If the underlying system has certain structures, more detailed information about the system is expected to be obtained.
In this section, we assume the following condition, linearity of the underlying dynamical system $f$ on $\Theta$, in addition to Assumption \ref{asm: arms}, \ref{asm:setnoise}, and \ref{asm:dyn sys}:
\begin{asm}[Linear dynamical systems] \label{asm:lineardym}
The dynamical system $f: \Theta \to \Theta$ is linear and is represented by a matrix $M \in \mathbb{R}^{d\times d}$.
\end{asm}
Let $\mathbb{C}^d = \oplus_{\alpha}V_\alpha$ be the decomposition via generalized eigenspaces of $M$, where $\alpha$ runs over the eigenvalues of $M$ and $V_\alpha := \mathscr{N}((\alpha I - M)^d)$.
We describe $\theta = \sum_\alpha \theta_\alpha$ with $\theta_\alpha \in V_\alpha$.
We remark that an eigenvalue $\alpha$ of $M$ such that $\theta_\alpha \neq 0$ is in the form of $e^{2 \pi i \ell/L}$ unless $|\alpha|<1$ by Proposition \ref{prop:mulinear}.

Our objective is to estimate some of, if not all of, the eigenvalues of $M$ with high probability within some error that decreases by the sample size. 
To this end, we define the meaningful subset of eigenvalues of $M$.
\begin{definition}[$(\theta, k)$-distinct eigenvalues]
For a vector $\theta \in \mathbb{C}^d$ and $k \in \mathbb{Z}_{>0}$, we define a $(\theta, k)$-distinct eigenvalue by an eigenvalue $\beta$ of $M^k$ such that $|\beta|=1$ and $\theta_\beta \neq 0$.
\end{definition}
In our case, starting from a vector $\theta$, the effect of the eigenvalues that are not of $(\theta, d)$-distinct eigenvalues of $M$ may not be observable.  Basically, once being able to ignore the effects of eigenvalues of magnitudes less than $1$, the system becomes nearly periodic and we aim at estimating $(\theta, d)$-distinct eigenvalues as we obtain more samples.

\begin{algorithm}[!t]
	\vspace{0.5em}
	\hspace*{\algorithmicindent} \textbf{Input}: Effective sample size $N\in\Z_{>0}$; threshold $\gamma(N)$\\
	\hspace*{\algorithmicindent} \textbf{Output}: Matrix $A_1(N)\hat{A}_0(N)^\dagger$
	\begin{algorithmic}[1]	
		\State Independently draw random unit vectors $\tilde{x}_m,~~m\in[d]$, from uniform distribution over the unit sphere in $\R^d$.
		\State Wait $N$ time steps.
		\For{$t = N+1, \cdots,N+2Nd^2$}
		\State $m_0\leftarrow \{(t-N-1) \mod  2d^2\}+1$
		\State $m\leftarrow \ceil{m_0/2d}$
		\State Sample arm $x_t=\tilde{x}_m$ and observe $\tilde{r}_t:=r_t(\tilde{x}_m)$.
		\EndFor
		\State Construct matrices $A_0(N)$ and $A_1(N)$ as in \eqref{A1},
		respectively.
		\State Obtain the low rank approximation $\hat{A}_0(N)$ of $A_0(N)$ via SVD with the threshold $\gamma(N)$.
		\State Output $A_1(N)\hat{A}_0(N)^\dagger$.
	\end{algorithmic}
	\caption{Eigenvalue estimation}
	\label{alg:algorithm2}
\end{algorithm}
Our eigenvalue estimation algorithm is summarized in Algorithm \ref{alg:algorithm2};
it maintains the following matrices.
For $N \in \mathbb{Z}_{>0}$, $d$ random unit vectors $\tilde{x}_1,\dots, \tilde{x}_d$, and $s=0,1$, we define the matrix $A_s(N)\in\C^{d\times d}$ so that its $(k,\ell)$ element is given by
\begin{align}
\sum_{j=0}^{N-1}r_{2(k-1)d + sd + 2d^2j + N + \ell}(\tilde{x}_k) e^{\frac{i2\pi j^2}{4L}}.
 \label{A1}
\end{align}
That is, after $N$ steps, the reward multiplied by $\exp(i2\pi j^2/4L)$ is placed from the top row of $A_0$ and then the top row of $A_1$, followed by the second rows of them, and so on.  Then, those values are summed up for every $2d^2$ steps or every $j$th cycle.
Here, after throwing away $N$ samples, the effects of eigenvalues of magnitude less than $1$ become negligible, and the trajectory becomes nearly periodically behaved under Assumption \ref{asm:lineardym}. 
The rest of the samples is used to average out the observation noise while maintaining some meaningful information about $M$.

The aforementioned exponential sum can be characterized by the Weyl-type sum of matrices, a key machinery for our algorithm, which we define below:
\begin{definition}
Let $W$ be a linear space and let $M_1,\dots, M_{N}$ for $N \in \mathbb{Z}_{>0}$ be linear maps on $W$.
For $L\in\mathbb{Z}_{>0}$, we define
\[\mathscr{W}\left((M_1,\dots, M_{N})\right) := \frac{1}{N}\sum_{j=0}^{N-1} M_{j+1} e^{\frac{i2\pi j^2}{4L}} .\]
\end{definition}
\begin{remark}
Let $E_{s,n}:= (\eta_{2di + N + sd + j + 1 + 2d^2n})_{i,j=0,\dots,d-1}$ be a noise matrix for $s=0,1$.
Let
\begin{align*}
X:=\left(
\begin{array}{c}
\tilde{x}_1^{\top}M^{1}  \\ x_2^{\top}M^{2d+1} 
\\ \vdots \\ \tilde{x}_d^{\top}M^{2(d-1)d+1}
\end{array}
\right) \text{ and }
K := (M\theta, \dots, M^d\theta).
\end{align*}
Then, $A_s(N)$ has an alternative description as follows:
\begin{align}
A_s(N) = X\mathscr{W}\left((M^{2d^2j})_{j=0}^{N-1}\right) M^{sd + N - 1}K + \mathscr{W}\left( (E_{s,j})_{j=0}^{N-1} \right).
\label{AsN}
\end{align}
\end{remark}
As in Proposition \ref{prop: weyl sum general} below, the Weyl-type sum has a crucial property.
Define $\kappa\geq 1$ by 
\begin{align}
\kappa:= {\inf_{P}{\left\{\|P\|\|P^{-1}\|:P^{-1}MP=J_M\right\}}},\nonumber
\end{align}
where $J_M$ is a Jordan normal form of $M$.
Also, we define $\Delta\in(0,1]$ to be a value such that, for any eigenvalue $\alpha$ of $M$ satisfying $|\alpha|<1$, $|\alpha|\leq 1- \Delta$ (define $\Delta=1$ if no such eigenvalue exists).  Note by Proposition \ref{prop:mulinear}, the existence of such spectral gap is guaranteed without any further assumptions.

Roughly speaking, Algorithm \ref{alg:algorithm2} estimates ``$A_1(N)A_0(N)^{-1} = XM^dX^{-1}$''.
Of course, the formula in ``...'' is not valid as $X$, $K$, and the Weyl-type sum are not necessarily invertible and we cannot recover full information of $M^d$ in general.
However, we can still reconstruct information of $M^d$ restricted on the eigenspaces for $(\theta,d)$-distinct eigenvalues.

To see this, we introduce the Weyl sum and its lower bound:
\begin{lemma}[Lower bound on the Weyl sum \cite{bourgain1993fourier, ohnote}]
	\label{lem:6}
	Define the Weyl sum by
	\begin{align}
	\mathscr{W}(N,b,q):= \sum_{j=0}^{N} e^{i2\pi\left(j\frac{2b}{4q}+j^2\frac{1}{4q}\right)}, \nonumber
	\end{align}
	for some $b\in\N, ~q\in\Z_{>0},~b<q$ and $N\in\Z_{>0}$.
	Then, for $N\geq 16q^2$, it holds that
	\begin{align}
	|\mathscr{W}(N,b,q)| \in \Omega \left(\frac{N}{\sqrt{q}}\right).\nonumber
	\end{align}
\end{lemma}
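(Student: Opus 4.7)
The plan is to exploit the fact that the summand $e^{2\pi i(2bj + j^2)/(4q)}$ is periodic in $j$ with period $4q$, and then to reduce the sum over one period to a shifted quadratic Gauss sum whose magnitude is known explicitly.

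First I would verify the periodicity. Writing $j = 4qk + r$ with $0 \le r < 4q$, one checks that $(2bj + j^2)/(4q) - (2br + r^2)/(4q)$ is an integer, so the summand depends only on $r$. Therefore, writing $N + 1 = 4qK + s$ with $0 \le s < 4q$, we obtain
\begin{align*}
\mathscr{W}(N,b,q) = K \sum_{r=0}^{4q-1} e^{2\pi i(2br + r^2)/(4q)} + E,
\end{align*}
where $|E| \le 4q$ is the contribution of the incomplete final period. Completing the square via $r^2 + 2br = (r+b)^2 - b^2$ and using periodicity in $r$ to shift the index, the inner sum equals $e^{-2\pi i b^2/(4q)} \, G(4q)$, where $G(m) := \sum_{r=0}^{m-1} e^{2\pi i r^2/m}$ is the classical quadratic Gauss sum.

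Second, I would invoke the standard evaluation of $G(m)$. Since $4q \equiv 0 \pmod 4$, we have $G(4q) = (1+i)\sqrt{4q}$, so $|G(4q)| = 2\sqrt{2q}$. This is the ingredient that forces the magnitude of each full period of $\mathscr{W}$ to be $\Theta(\sqrt{q})$ rather than $\Theta(1)$; the Gauss sum concentrates precisely to order $\sqrt{q}$ rather than $q$. Combining with the previous display gives the lower bound
\begin{align*}
|\mathscr{W}(N,b,q)| \;\ge\; 2K\sqrt{2q} - 4q.
\end{align*}

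Third, I would check that the hypothesis $N \ge 16q^2$ is exactly what is needed to make the tail negligible. Under this hypothesis, $K \ge N/(4q) - 1 \ge 4q - 1$, so $2K\sqrt{2q} \ge (N/(2q) - 2)\sqrt{2q}$ dominates the $4q$ term by an absolute multiplicative factor, yielding $|\mathscr{W}(N,b,q)| \gtrsim N/\sqrt{q}$. The bulk of the argument is routine once the periodicity observation and the Gauss sum evaluation are in place; the only step requiring outside input is the closed form of $G(4q)$, and the only minor subtlety is tracking that the linear coefficient $b$ merely rotates the Gauss sum by a unit-modulus factor and therefore does not affect the lower bound. I would expect the Gauss sum evaluation (or a soft $\Omega(\sqrt{q})$ substitute via the standard Cauchy--Schwarz / Parseval argument on $\Z/4q\Z$) to be the main technical pivot of the proof.
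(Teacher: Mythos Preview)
Your argument is correct. The paper's own proof is a one-line citation: it invokes Proposition~3.1 of the referenced note by T.~Oh (on Bourgain's NLS paper) after checking that the hypotheses $\gcd(1,4q)=1$, $4q\equiv 0\pmod 4$, and $2b$ even are satisfied. Your route, by contrast, is self-contained: you observe the $4q$-periodicity of the summand, reduce the full-period sum via completion of the square to the quadratic Gauss sum $G(4q)=\sum_{r=0}^{4q-1}e^{2\pi i r^2/(4q)}$, and then use the classical evaluation $|G(4q)|=2\sqrt{2q}$ valid for moduli divisible by~$4$. This is almost certainly what the cited proposition does internally, so the two approaches are not different in spirit; you have simply unwrapped the black box. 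The payoff of your version is transparency: it shows exactly why the threshold $N\ge 16q^2$ is the right one (so that the $O(q)$ boundary error is dominated by the $\Theta(N/\sqrt{q})$ main term) and makes explicit that the linear coefficient $b$ contributes only a unit-modulus phase and hence does not affect the magnitude. The paper's version is shorter but relies on an external source for the substance.
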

\begin{proof}
	It is immediate from Proposition 3.1 in \cite{ohnote} because
	${\rm gcd}(1,4q)=1$, $4q\equiv 0 \mod 4$, and $2b$ is even.
\end{proof}
We define $C_{\rm ws}(L) > 0$ by 
\[
C_{\rm ws}(L) := \inf\left\{ C >0 : C^{-1}< \left|\frac{1}{N}\mathscr{W}(N,b,L)\right| < C \text{ for any }N\geq16L^2, 0\le b <L\right\}.
\]
Then, we have the following proposition:
\begin{proposition}\label{prop: weyl sum general}
Let $M_1$ and $M_{<1}$ be matrices as in Corollary \ref{cor: decomposition of M}.
We define linear maps on $W$ by
\begin{align*}
    Q_N(M_1)    &:= \mathscr{W}\left( (M_1^{2d^2j})_{j=0}^{N-1} \right) \\
    Q_N(M_{<1}) &:= \mathscr{W}\left( (M_{<1}^{2d^2j})_{j=0}^{N-1} \right).
\end{align*}
Then, we have the following statements:
\begin{enumerate}
    \item  $\mathscr{W}\left((M_\theta^{2d^2j})_{j=0}^{N-1}\right) = Q_N(M_1) + Q_N(M_{<1})$, \label{sum}
    \item for any $N\geq16L^2$, $Q_N(M_1)$ is invertible on $W_{=1}$, \label{C1-1}
    \item for any $r \ge 0$ and for any $N\geq16L^2$,  $\|M_1^r Q_N(M_1)|_{W_{=1}}\|, \|(M_1^rQ_N(M_1))|_{W_{=1}}^{-1}\| \le \kappa C_{\rm ws}(L)$, \label{C1-2}
    \item for any $r \ge 2(d-1)$, we have \label{C2}
    \[\|M_{<1}^rQ_N(M_{<1})\| \le \frac{d^2\kappa e^{-\Delta(r-d+1)}}{N\Delta^{d-1}}.\]
\end{enumerate}
\end{proposition}
\begin{proof}
We prove \ref{sum}.
By the properties \ref{sum of M} and \ref{commutativity} in Corollary \ref{cor: decomposition of M}, we have $\mathscr{W}((M^{2d^2j})_{j=0}^{N-1}) =  Q_N(M_1) + Q_N(M_{<1})$.
Next, we prove \ref{C1-1}.
When we regard $Q_N(M_1)|_{W_{=1}}$ as a linear map on $W_{=1}$, it is represented as a diagonal matrix ${\rm diag}(\mathscr{W}(N, b_1, L), \dots,\mathscr{W}(N, b_m, L))$, where $m={\rm dim}W_{=1}$.
Therefore, by Lemma \ref{lem:6}, $Q_N(M_1)$ is a bijective linear map on $W_{=1}$.
Next, we prove \ref{C1-2}. 
We estimate $\|M_1^r Q_N(M_1)\|$.
Let $p: \mathbb{C}^d\to W$ be the orthogonal projection and let $i: W \to \mathbb{C}^d$ be the inclusion map.
Let $\tilde{M}_{1} := iM_1p \in \mathbb{C}^{d\times d}$.
Then, we have
\[
\|Q_N(M_1)\|=\|Q_N(\tilde{M}_1)\| \le \kappa C_{\rm ws}(L).
\]

Next, we estimate $\|M_{<1}^r Q_N(M_{<1}) \|$.
Let $\tilde{M}_{<1} := iM_{<1}p$.
Then, $iM_{<1}^rQ_N(M_{<1})p = \mathscr{W}((\tilde{M}_{<1}^{2d^2j+r})_{j=0}^{N-1})$ and $\|iM_{<1}^rQ_N(M_{<1})p\| = \|M_{<1}^rQ_N(M_{<1})\|$.
Let $P$ be a regular matrix such that
\[
\tilde{M}_{<1} = P
J
P^{-1},
\]
where $J$ is the Jordan normal form.
Then, for $r \ge 2(d-1)$, we see that
\begin{align*}
\|M_{<1}^rQ_N(M_{<1})\| & = \|iM_{<1}^rQ_N(M_{<1})p\|\\
&\le \frac{\kappa}{N} \sum_{j=0}^{N-1} \|J^{2d^2j+r}\|
          < \frac{d^2 \kappa}{N} \sum_{j=0}^{\infty} \binom{2d^2j+r}{d-1}(1-\Delta)^{2d^2j+r-d+1} \\
         & \le \frac{d^2\kappa (1-\Delta)^{r-d+1}}{N\Delta^{d-1}} \le \frac{d^2\kappa e^{-\Delta(r-d+1)}}{N\Delta^{d-1}}.
\end{align*}
The second last inequality is proved as follows: for $|a|<1$, $n\ge1$ and $r\ge m \ge 0$,
\begin{align*}
\left|\sum_{j=0}^\infty \binom{nj+r}{m}a^{nj-m+r} \right| 
&= \left| \left.\frac{1}{m!}\frac{d^m}{dx^m}\sum_{j=0}^\infty x^{nj+r}\right|_{x=a} \right|\\
&\le \frac{1}{m!} \sum_{j=0}^m \binom{m}{j}\frac{r!}{(r-m+j)!}|a|^{r-m+j} \left| \left.\frac{d^j}{dx^j}\frac{1}{1-x^{n}} \right|_{x=a}\right| \\
&\le \frac{1}{m!} \sum_{j=0}^m \binom{m}{j}\frac{r!}{(r-m+j)!} \frac{j!|a|^{r-m+j}}{(1-|a|)^j} 
= \sum_{j=0}^m \binom{r}{j} \frac{|a|^{r-m+j}}{(1-|a|)^j},
\end{align*}
where, the last inequality follows from
\[ \left|\frac{d^m}{dx^m}\frac{1}{1-x^n} \right| = \left|\sum_{\zeta^n=1}\frac{m! \zeta^{1-n}}{n(\zeta-x)^m}\right| \le \frac{m!}{(1-|x|)^m}.\]
Thus, we have
\[\sum_{j=0}^m \binom{r}{j} \frac{|a|^{r-m+j}}{(1-|a|)^j} \le  |a|^{r-m}\sum_{j=0}^r\binom{r}{j}\left( \frac{|a|}{1-|a|}\right)^{j} \le \frac{|a|^{r-m}}{(1-|a|)^m}.\]
\end{proof}
Proposition \ref{prop: weyl sum general} plays an essential role in our analysis and guarantees that the information in $A_s(N)$ about $(\theta, d)$-distinct eigenvalues does not vanish while noise effects are canceled out.
Now, we state our main theoretical result for the eigenvalue estimation algorithm.
Before stating the theorem, we introduce lower rank approximation via the singular value threshold.
\begin{definition}\label{def: lower rank approximation}
Let $A\in \mathbb{C}^{n\times m}$ be a matrix.
Let $A =U\left(D~O\right)V^*$ be a singular valued decomposition of $A$ where $U \in \mathbb{C}^{n \times n}$ and $V \in \mathbb{C}^{m \times m}$ are unitary matrices and $D := {\rm diag}(\sigma_1, \dots, \sigma_r, \mathbf{o}^{\top})$ is a diagonal matrix with nonnegative components.
Let $\gamma > 0$.
We define a low rank approximation  $A_\gamma$ of $A$ via the singular value threshold $\gamma$ by the matrix $A_\gamma$ defined by $A_\gamma = U\left(D_\gamma~ O \right)V^*$, where, $D_\gamma := {\rm diag}(\mathbf{1}_{[\gamma, \infty)}(\sigma_1)\sigma_1,\dots,\mathbf{1}_{[\gamma, \infty)}(\sigma_r)\sigma_r,\mathbf{o}^\top)$ and $\mathbf{1}_{I}$ is defined to be the characteristic function supported on $I \subset \mathbb{R}$.
\end{definition}
Given this definition, we are ready to present the following main result:
\begin{theorem}
	\label{thm:2}
	Suppose Assumptions \ref{asm: arms}, \ref{asm:setnoise}, \ref{asm:dyn sys}, and \ref{asm:lineardym} hold.
	Given $\delta\in(0,1]$,
	let the effective sample size
	\begin{align}
    N\geq  \max\left\{16L^2, \frac{-(d-1)\log\Delta}{\Delta} + \frac{\log(B_\theta\kappa^2) + d + 6}{\Delta} + d\right\},\label{N}
	\end{align}
	and
	$\gamma(N)={(\sqrt{4d^2R^2\log\left(4d^2/\delta\right)}+1)}/{\sqrt{N}}$.
	Then, there exists a matrix $A$ whose eigenvalues are zeros except for $(\theta, d)$-distinct eigenvalues of $M$, such that the output of Algorithm \ref{alg:algorithm2}, i.e. $A_1(N)\hat{A}_0(N)^\dagger$,
	satisfies, with probability at least $1-\delta$, that
	\begin{align}
	\left\|A-A_1(N)\hat{A}_0(N)^\dagger\right\|\leq C \left(\frac{R^2\left(\log\left(1/\delta\right)+1\right)+1}{\sqrt{N}}\right). \label{order}
	\end{align}
	Here, $\hat{A}_0(N)^\dagger$ is the Moore-Penrose pseudo inverse of a lower rank approximation of $A_0(N)$ via the singular value threshold $\gamma(N)$.
	The constant $C>0$ depends on $\theta$, $M$, $d$, $(\tilde{x}_m)_{m=1}^d$, and $C_{\rm ws}(L)$. 
\end{theorem}
We mention that by using the results shown in \cite{song2002note}, the bound on spectral norm \eqref{order} can be translated to the bounds on eigenvalues, where the constant depends on the form of $A$.
As described in Theorem \ref{thm:2}, the constant is not the absolute constant for any problem instance but depends on several factors; however, for the same execution, this rate is useful to judge how many samples one collects to estimate eigenvalues.

\begin{remark}
We note that we can reconstruct the $(\theta, 1)$-distinct eigenvalues of $M$ via Algorithm \ref{alg:algorithm2} using the following trick:  Fix non-negative integer $r\ge0$.
Take $d+r$ random unit vectors $\tilde{x}_1,\dots,\tilde{x}_{d+r}$.
Then, for $s=0,1$, we may define a matrix $A_s(N;r) \in \mathbb{C}^{(d+r) \times (d+r)}$ so that its $(k,\ell)$ element is given by
\[
\sum_{j=0}^{N-1}r_{2(k-1)(d+r) + s(d+r) + 2(d+r)^2j + N + \ell}(\tilde{x}_k) e^{\frac{i2\pi j^2}{4L}}.
\]
Then, we see that Algorithm \ref{alg:algorithm2} outputs a matrix $\tilde{A}(r)$ that well approximates the $(\theta,d+r)$-distinct eigenvalues of $M$ since the matrix $A_s(N;r)$ coincides with $A_s(N)$ in the case when we replace $M$ and $\theta$ with $M(r)$ and $\theta(r)$ defined by
\begin{align*}
    M(r) :=
    \left(
    \begin{array}{cc}
        M & O \\
        O & O
    \end{array}
    \right) \in \mathbb{C}^{(d+r)\times (d+r)}, \theta(r) := 
    \left(
    \begin{array}{c}
    \theta \\
    \mathbf{o}\\
    \end{array}
    \right) \in \mathbb{C}^{d+r}.
\end{align*}

Let $r$ be an integer such that $r+d$ is prime to $L$ and fix a positive integer $m$ such that $m(r+d) \equiv 1 \mod L$.
Then, the eigenvalues of $\tilde{A}(r)^m$ is close to those of $(\theta,1)$-distinct eigenvalues if we take sufficiently large $N$.
\end{remark}
\section{Simulated experiments}\label{sec: similation experiments}
In this section, we present simulated experiments that complement the theoretical claims.
In particular, we conducted period estimations for an instance of LifeGame \cite{conway1970game}, which is a special case of cellular automata \cite{von1966theory}, and for a nearly periodic toy system, and an eigenvalue estimation for a linear system, where some dimensions are for permutations and the rest is for shrinking.

\noindent\textbf{Period estimation: LifeGame.}
\begin{figure}[t]
	\begin{center}
		\includegraphics[clip,width=1.1\textwidth]{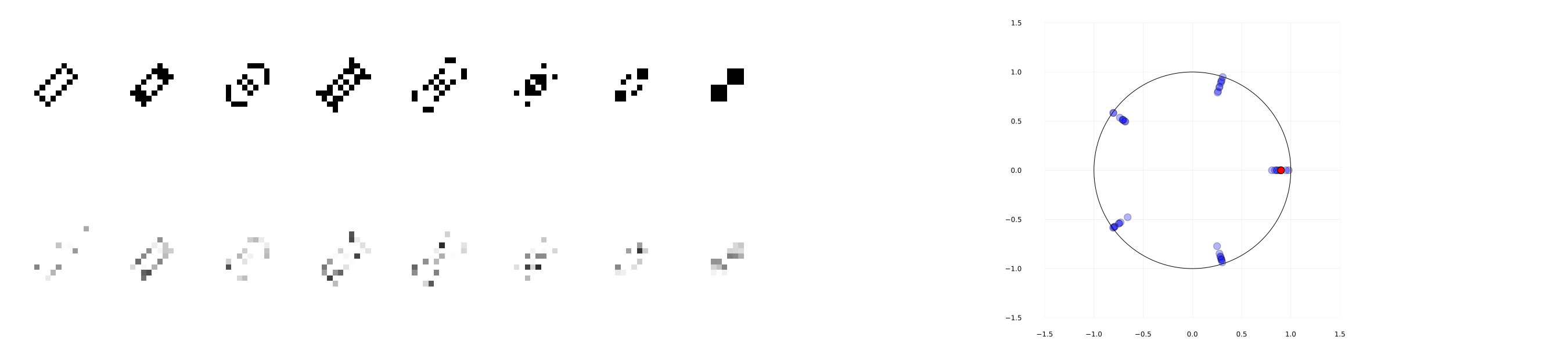}
	\end{center}
	\caption{Left: Illustration of a period eight instance of LifeGame; (top) original transitions.  (down) an instance of noisy observation.  Right: $\mu$-nearly periodic dynamics \eqref{muperiodicex}.} 
	\label{fig:ca}
\end{figure}
We use a specific instance of LifeGame which is illustrated in Figure \ref{fig:ca}.
As shown on the top eight pictures, starting from certain configuration of cells, it shows transitions of period eight.
The sample size is computed as the smallest integer satisfying \eqref{T_p}, and the threshold $\varepsilon$ is given by \eqref{periodeps}.
To prevent the dimension from becoming too large, we used five cells that correctly display period eight; that is $d=5$.
Noise $\eta_t$ is given by i.i.d. Gaussian with proxy $R=0.3$, and the down eight pictures of Figure \ref{fig:ca} are some instances of noisy observations.
We tested 50 different random seeds (i.e., 1, 51, 101, 151, ... , 2451), and computed the error rate (the number of runs producing a wrong estimate other than the fundamental period eight, which is divided by 50); and it was zero.

\noindent\textbf{Period estimation: Simple $\mu$-nearly periodic system.}
We consider the following $\mu$-nearly periodic system that circulates over a circle with small variations:
\begin{align}
r_{t+1}=\mu\left(\alpha\frac{r_t-1}{\mu}-\ceil{\alpha\frac{r_t-1}{\mu}}\right)+1,~~~~~\theta_{t+1}=\theta_t+\frac{2\pi}{L},\label{muperiodicex} 
\end{align}
where $r$ and $\theta$ are the radius and angle, and $\alpha\notin\Q$.
We use $\mu=0.001$, $L=5$, and $\alpha=\pi$.  Noise $\eta_t$ is drawn i.i.d. from the uniform distribution within $[-R,R]$ for $R=0.3$.
We tested 50 different random seeds (i.e., 1, 51, 101, 151, ... , 2451), and computed the error rate; and it was zero.

\noindent\textbf{Eigenvalue estimation: Permutation and shrink.}
We use $d=5$, and $M\in\R^{5\times 5}$ is made such that 1) the first four dimensions are for permutation (i.e., each of row and column of $4\times 4$ sub-matrix has only one nonzero element that is one.), and 2) the last dimension is simply shrinking; we gave $0.7$ for $(5,5)$-element of $M$.
Initial vector $\theta_0$ and each arm $\tilde{x}_{m},~m\in[d],$ are uniformly sampled from the unit sphere in $\R^5$.
The value $L$ is computed by $4!=24$.
We used the smallest integer $N$ that satisfies \eqref{N}, multiplied by $C_{\rm sim}>0$.
The results are shown in Table \ref{tab:reseigen}; it is observed that the more samples we use the more accurate the estimates become to $(\theta_0, 5)$-distinct eigenvalues of $M$.  Noise $\eta_t$ is drawn i.i.d. from the uniform distribution within $[-R,R]$ for $R=0.3$, and the Table \ref{tab:reseigen} is of the random seed 1234.
We also tested 50 different random seeds (i.e., 1, 51, 101, 151, ... , 2451) for $C_{\rm sim}=30$, and computed the mean absolute error between the true $(\theta_0, 5)$-distinct eigenvalues and their nearest estimated values; it was $0.0019$, which is sufficiently small.

\begin{table}
	\caption{Results for the eigenvalue estimation.  The top-most row shows the true eigenvalues of $M^5$, and the second row shows its $(\theta_0, 5)$-distinct eigenvalues.  From the third to sixth row, it shows the estimated eigenvalues for different values of $C_{\rm sim}$.}
	\label{tab:reseigen}
	\centering
	\begin{tabular}{l|c|c|c|c|c}
		\toprule
		eigenvalues of $M^5$     & $1.000$ & $1.000$ & $-0.500-0.866i$ & $-0.500+0.866i$ & $0.168$ \\
		$(\theta_0, 5)$    & $1.000$ & $0$ & $-0.500-0.866i$ & $-0.500+0.866i$ & $0$ \\
		\midrule
		when $C_{\rm sim}=1$     & $1.000+0.008i$ & $0$ & $-0.489-0.860i$ & $-0.510+0.869i$ & $0$ \\
		when $C_{\rm sim}=5$     & $0.999-0.000i$ & $0$ & $-0.495-0.867i$ & $-0.501+0.862i$ & $0$ \\
		when $C_{\rm sim}=10$     & $1.000+0.003i$ & $0$ & $-0.497-0.867i$ & $-0.501+0.864i$ & $0$ \\
		when $C_{\rm sim}=30$     & $1.000+0.001i$ & $0$ & $-0.500-0.866i$ & $-0.500+0.864i$ & $0$ \\
		\bottomrule
	\end{tabular}
\end{table}

\noindent\textbf{Attempt on applications to more realistic data: Worm simulation.} Finally, we present an attempt on applications to more realistic situation, which will demonstrate some of the important aspects that need to be taken into account when naively applying our algorithms to the real world problems.  In particular, we use OpenWorm (https://openworm.org/) \citep{szigeti2014openworm} to simulate a worm motion (we ran more than around $10$ hours for simulations); example images from this simulation is shown in Figure \ref{fig:realisticworm} (left).  It is known that the behavior of worm has some periodicity (e.g., \citet{ahamed2021capturing}).
We resized the images to $6\times6$ (i.e., $d=36$), and downsampled the video by extracting one per eight images.  Also, we repeat the same video in the way that the entire video still shows smooth flow (i.e., cut the video so that the end image is similar to the image in the beginning, and concatenate the same videos) so that we can sample a large number of data.

\begin{figure}[t]
	\begin{center}
		\includegraphics[clip,width=0.9\textwidth]{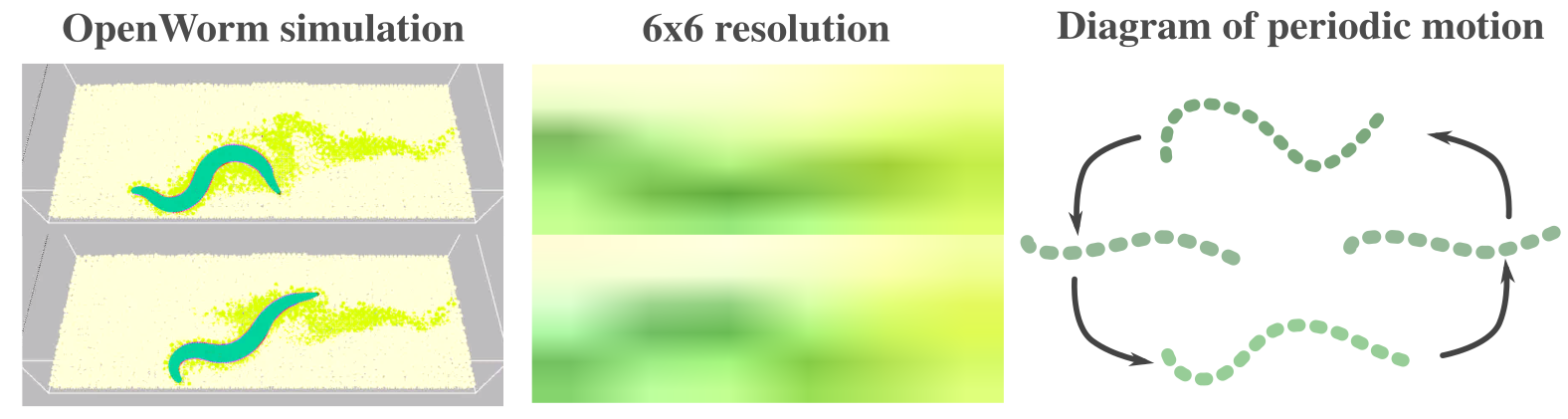}
	\end{center}
	\caption{Left: OpenWorm (https://openworm.org/) simulation of a worm motion.  Middle: The extracted sequence of images with $6\times6$ resolution for our experiment.  Right: Diagram of how the worm motion is viewed as a (nearly) periodic motion.} 
	\label{fig:realisticworm}
\end{figure}

First, we assume we have no access to reasonable values for the hyperparameters, and show what happens when we use hyperparameters that do not reflect the target system.
With such a choice of parameters (see Appendix \ref{app:realisticsimsetup}), the algorithm outputs an estimate that contradicts to the hyperparemters; with the value $L_{\rm max}= 25$, it outputs the estimate $40$.  In particular, there exist dimensions that output the estimates $8$ and $20$, from which the final estimate becomes $40$.  Note that, in this experiment, we use a fixed number of samples $T_p$ since we have no access to the reasonable hyperparameters.
On the other hand, we increased the margin $\rho$ significantly, and then the output becomes $20\leq L_{\rm max}$ where each dimension shows $1$ or $20$ in this case.  In fact, one segment of our video data (recall we concatenate the same video to create a sequence of an indefinite number of images) consists of $120$ images and contains $6$ periods of sequence.

We mention that the aforementioned fact implies that the choice of hyperparameters may be practically validated by confirming that the outputs are consistent to the parameters. 

Next, we assume that the system is linear and run the DMD over this canonical basis of the state space.
Let $M_{\rm DMD}$ be the solution to the exact DMD over the video data, and we compute the eigenvalues of $M^d_{\rm DMD}$ and of the output of Algorithm \ref{alg:algorithm2}.
We tested the algorithm with the fixed hyperparameters given in Appendix \ref{app:realisticsimsetup} and with varying number of $N$, namely, $10^3$, $10^4$, and $10^5$.
For the cases of $10^3$ and $10^4$, we only obtained one eigenvalue which is close to $1+0i$ and the others were almost zero.
For $N=10^5$, we obtained three nonzero eigenvalues which are plotted in Figure \ref{fig:dmd_eigen} against the top four eigenvalues of $M^d_{\rm DMD}$.  The actual nonzero eigenvalues and the estimated errors for all three cases are given in Table \ref{tab:dmd_eigen}.

From the results, we argue that, for some real applications where the system is not exactly linear, the eigenvalue estimation algorithm may not work well although it does not output {\em insane} values.

\begin{minipage}{\textwidth}
	\begin{minipage}[b]{0.49\textwidth}
		\centering
		\includegraphics[clip,width=\textwidth]{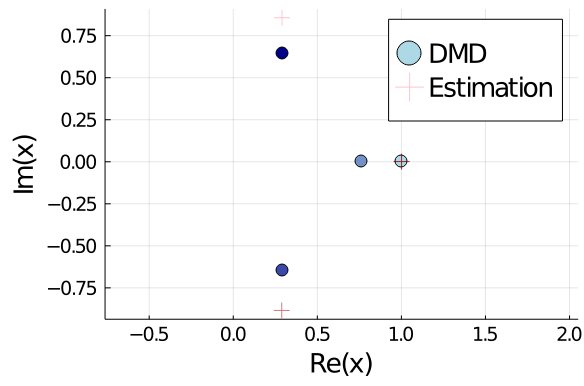}
		\captionof{figure}{Estimated eigenvalues having large magnitudes (red) and the top four DMD eigenvalues (blue).}
		\label{fig:dmd_eigen}
	\end{minipage}
	\hfill
	\begin{minipage}[b]{0.49\textwidth}
		\centering
		\begin{tabular}[b]{lcc} \toprule
			$N$ & Nonzero eigenvalues & Errors\\ 
			\midrule
			$10^3$ & $1.000+0.002i$ &$0.199$ \\\hline
			$10^4$ & $1.000+0.002i$ &$0.199$ \\\hline
			& $1.001+0.001i$ & \\
			$10^5$ & $0.288+0.884i$ &$0.172$ \\
			& $0.290+0.867i$ & \\
			\bottomrule
			\vspace{1em}
		\end{tabular}
		\captionof{table}{Estimated nonzero eigenvalues for $N=10^3,10^4,10^5$, and the total errors to the true DMD matrix $M^d_{\rm DMD}$.}
		\label{tab:dmd_eigen}
	\end{minipage}
\end{minipage}
\section{Discussion and conclusion}
This work proposed novel algorithms for estimating periodic information about the dynamical systems from bandit feedback contaminated by a sub-Gaussian noise.  Since we consider a rather novel problem setup, the setting itself would be seen as a limitation of our work, and we present a potentially important list of future works below.

\textbf{Choice of hyperparameters such as $L_{\rm max}, N$:}
If those hyperparameters are correctly chosen for nearly periodic systems, the outputs of our algorithms are consistent and reasonable across runs, which follows the theoretical insights; therefore, practically, one can check if the hyperparameters are properly chosen by running the algorithms.  Refer to Section \ref{sec: similation experiments} for related discussions; in short, inappropriate choice of hyperparameters leads to contradicting outputs (e.g., period estimate larger than the assumed maximum value).  On the other hand, studying if it is possible to theoretically ensure correctness of hyperparameters (e.g., by standard {\em doubling trick} \citep{cesa1997use} for efficient guessing) or to identify non-periodic systems is an important future work.

\textbf{Extension to general spectrum information estimation problems:}
We only impose nearly periodicity on the dynamical systems on top of the linearity for the eigenvalue estimation problem.  Nevertheless, additional studies are needed to allow other forms of observations (not limited to bandit feedback) and non-periodic systems to consider eigenvalues with arguments of $2\pi$ times irrational numbers.
Actually, the asymptotic bounds of the Weyl sum themselves are still valid when it is sufficiently well approximated by a rational number \citep{bourgain1993fourier, ohnote}.
Moreover, allowing control inputs would make eigenvalues of magnitudes smaller than one efficiently recoverable.

\textbf{Extensions to random dynamical systems:}
This work studied deterministic dynamical systems; however,
we conjecture that, under the condition that the variance of trajectories generated by a random dynamical system (RDS) (cf. \cite{arnold1995random}) is sufficiently small, the similar estimation procedure is adopted for such RDSs.
It is important to study if the estimation problem becomes easier when the system is driven by a particular noise as it could be treated as a random control input.
Also, studying other dynamic structures such as the Lyapunov exponents for nonlinear systems is an interesting future work.

\textbf{Study of statistical estimation leveraged by other number theoretical results:} 
The proper use of exponential sums enables us to average out the noise while preserving particular information.  Studying when this separation is feasible for different sets of information, noise, and class of problems should be important.

\textbf{Optimality of the results:}
Although we gave a sufficient number of samples for provably guaranteeing (approximate) correctness of the estimates, it is unclear if our sample complexity is what one can best achieve under particular problem settings.  Recall that the current problem settings without an assumption on boundedness of the set of arms $D$ exclude the lower bound arguments of the minimally required samples as mentioned in Remark \ref{rem:exclude_lower}.

As such, instead of arguing the tightness, to justify the complexity of our algorithms, we hence mention that the very naive approach for (nearly) period estimation would cost the order of $L_{\rm max}!$ sample complexity.  To see this, suppose we know the maximum possible (nearly) period $L_{\rm max}$ then it follows that $L_{\rm max}!$ is a true (nearly) period as a multiple of the fundamental (nearly) period.  As such, one could employ the concentration of measures to average out the sample.  With sufficiently many sequences of length $L_{\rm max}!$ depending on the rate of concentration, one could approximately reconstruct the original (noiseless) sequence of length $L_{\rm max}!$ from which one could estimate a desired $L\leq L_{\rm max}$ which is an (aliquot) nearly period.

Lower bound of the minimally required samples may be discussed by using similar arguments to the best arm identification problems (cf. \citet{kaufmann2016complexity}); however, our problems consider dynamical systems, and we have no clue on potential technical tools to deal with the arguments at this point.

\section*{Acknowledgments}
We thank the constructive comments by anonymous reviewers for improving this work.
Motoya Ohnishi thanks Sham Kakade, Yoshinobu Kawahara, Emanuel Todorov, Jacob Sacks, Tomoharu Iwata, and Hiroya Nakao for valuable
discussions and thanks Sham Kakade for computational supports.  
This work of Motoya Ohnishi, Isao Ishikawa, and Masahiro Ikeda was supported by JST CREST Grant, Number JPMJCR1913,
including AIP challenge program, Japan.
Also, Motoya Ohnishi is supported in part by Funai Overseas Fellowship.
Isao Ishikawa is supported by JST ACT-X Grant, Number JPMJAX2004, Japan.
This work of Yuko Kuroki was supported by Microsoft Research Asia and JST ACT-X JPMJAX200E.

\bibliographystyle{tmlr}

\clearpage

\appendix
\section{Proof of Proposition \ref{prop: key prop for algorithm}}
\begin{proof}
\begin{align*}
    B &:= L_{\max} \sup_{t>0}|a_t|,\\
    C &:= \sqrt{4R^2\log(4/\delta)},\\
    D &:= L_{\max}^2\mathcal{C}_0\left(\mu + \sup_{t \ge 1}|a_t|\right).
\end{align*}
Then, $\sqrt{T}>0$ satisfies the following inequality
\begin{align*}
    \sqrt{T} \ge \max\left\{\frac{C + \sqrt{C^2 + 4(\varepsilon - \mu) B}}{2(\varepsilon - \mu)}, \frac{C + \sqrt{C^2 + 4(\varepsilon - \mu) D}}{2(\varepsilon - \mu)}\right\}
\end{align*}
if and only if
\begin{align*}
        \varepsilon & \le 2\varepsilon - \mu - \frac{L_{\max} \sup_{t>0}|a_t|}{T} - \sqrt{\frac{4R^2\log(4/\delta)}{T}}, \text{ and} \\
        \varepsilon & \ge \mu + \frac{L_{\max}^2\mathcal{C}_0(\mu + \sup_{t \ge 1}|a_t|)}{T} + \sqrt{\frac{4R^2\log(4/\delta)}{T}}.
\end{align*}
Since $L_{\max} \ge 2$, $D \ge B$, $\mu < \gamma\sup_{t \ge 1}|a_t|$, $\mathcal{C}_0(1+\gamma)<3$, and 
\begin{align*}
    \varepsilon - \mu &= \frac{\sigma_0(1 - \lambda)}{\sqrt{L_{\max}}}\cdot \frac{\sqrt{L_{\max}}}{1 + \sqrt{4L_{\max} + 1}}\\
    &\ge \frac{\sigma_0(1-\lambda)}{2\sqrt{2}\sqrt{L_{\max}}}, 
\end{align*}
\begin{align*}
&\max\left\{\frac{C + \sqrt{C^2 + 4(\varepsilon-\mu) B}}{2(\varepsilon - \mu)}, \frac{C + \sqrt{C^2 + 4(\varepsilon - \mu) D}}{2(\varepsilon - \mu)}\right\}\\
&=\frac{C + \sqrt{C^2 + 4(\varepsilon - \mu) D}}{2(\varepsilon - \mu)}\\
&\le 
2\sqrt{\frac{L_{\max}C^2}{4(\varepsilon - \mu)^2}+ \frac{D}{(\varepsilon - \mu)}}\\
&\le 2\sqrt{\frac{2L_{\max}^2C^2}{\sigma_0^2(\xi-\lambda)^2}+ \frac{6\sqrt{2}L_{\max}^{5/2}\sup_{t\ge1}|a_t|}{\sigma_0(\xi-\lambda)}}\\
&\le 2\sqrt{\frac{2L_{\max}^2C^2}{\sigma_0^2(\xi-\lambda)^2}+ \frac{9L_{\max}^{5/2}\sup_{t\ge1}|a_t|}{\sigma_0(\xi-\lambda)}},
\end{align*}
where we used $
\sqrt{1-2\gamma}=2\gamma\sqrt{L_{\max}}$.
Since $2\varepsilon \le \sqrt{(\sigma_0^2 - 2\mu\sigma_0)/L_{\max}} $, the statement follows from Lemma \ref{lem: explicit order}.
\end{proof}

\section{Proof of Theorem \ref{thm:1}}
\begin{proof}
 Let $\mathscr{K}$ be the set of all combinations $(m,\beta,s,q)$ such that $m \in [d]$, $\beta\in[L_{\rm max}]$, $s\in\{0,1,\dots,\beta-1\}$ and $\{q = \alpha/\ell : \alpha \in \{1,\dots, \ell-1\}\}$.  
 We remark that
 \begin{align*}
     |\mathscr{K}| &\le \sum_{m=1}^d \sum_{\beta=1}^{L_{\max}} \sum_{\ell \le L_{\max}/\beta} \sum_{\alpha=1}^{\ell-1}\sum_{s=0}^{\beta-1}1 
      = d\sum_{\beta=1}^{L_{\max}} \sum_{\ell \le L_{\max}/\beta} \beta(\ell - 1)\\
     & \le d\sum_{\beta=1}^{L_{\max}} \frac{L_{\max}}{2}\left(\frac{L_{\max}}{\beta} - 1\right)\\
     & \le \frac{dL_{\max}^2 \log L_{\max}}{2}.
 \end{align*}
 Also, let $\mathcal{E}^{T_p}_{\kappa}$ be the event such that, for the combination $\kappa \in \mathscr{K}$,
\begin{align}
\left|\frac{1}{\floor{T_p/\beta}}\sum_{j=0}^{\floor{T_p/\beta}-1}\left\{\eta_{t_0+\beta j+s}e^{\frac{i2\pi \alpha j}{\ell}}\right\}\right|\leq \sqrt{\frac{4R^2\log\left(4dL_{\rm max}^2\log L_{\max}/\delta\right)}{\floor{T_p/\beta}}}.\nonumber
\end{align}
Because the error sequence satisfies conditionally $R$-sub-Gaussian, using Lemma \ref{lem:2} and from the fact that any subsequence of the filtration $\{\mathcal{F}_\tau\}$ is again a filtration, we obtain, for each $\kappa\in\mathscr{K}$,
\begin{align}
\Pr\left[\mathcal{E}^{T_p}_{\kappa}\right]\geq 1-\frac{\delta}{dL_{\rm max}^2\log L_{\max}}.\nonumber
\end{align}

Define $\mathcal{E}^{T_p}:=\bigcap_{\kappa\in\mathscr{K}}\mathcal{E}^{T_p}_{\kappa}$.
Then, it follows from the Fr\'{e}chet inequality that
\begin{align}
\Pr\left[\mathcal{E}^{T_p}\right]&= \Pr\left[\bigcap_{\kappa\in\mathscr{K}}\mathcal{E}^{T_p}_{\kappa}\right]\geq |\mathscr{K}|\left(1-\frac{\delta}{dL_{\rm max}^2\log L_{\max}}\right)-(|\mathscr{K}|-1)\nonumber\\
&=1-\frac{\delta|\mathscr{K}|}{dL_{\rm max}^2\log L_{\max}}\geq 1-\delta.\nonumber
\end{align}
Let $\widehat{L}$ be the output of Algorithm \ref{alg:algorithm1_v2}.
We show that, with probability $1-\delta$, $\widehat{L}$ is $(\rho, \sqrt{d})$-anp of $L$.
In fact, suppose $\widehat{L}$ is not $(\rho, \sqrt{d})$-anp.
We note that $\widehat{L} < L$.
There exists $s \in \{0,\dots,\widehat{L}-1\}$ and $t_1, t_2 \in \mathbb{Z}_{>0}$,
\begin{align*}
    \| f^{s + \widehat{L}t_1}(\theta) - f^{s+\widehat{L}t_2}(\theta)\|_{\R^d} > \rho + 2\sqrt{d}\mu. \label{eq: eq1}
\end{align*}
Let $m' \in \mathrm{argmax}_{m=1,\dots,d} \left| f^{s + \widehat{L}t_1}(\theta)^\top\mathbf{u}_m - f^{s+\widehat{L}t_2}(\theta)^\top\mathbf{u}_m\right|$.
Put $a_t := f^{s + \widehat{L}t}(\theta)^\top\mathbf{u}_{m'}$.
Then, for any $t, t' \in \mathbb{Z}_{>0}$, we have 
\begin{align*}
    |a_t - a_{t'}| 
    &\ge |a_{t_1} - a_{t_2}| - 2\mu \\
    &\ge \frac{\| f^{s + \widehat{L}t_1}(\theta) - f^{s+\widehat{L}t_2}(\theta)\|_{\R^d}}{\sqrt{d}} - 2\mu \\
    &>\frac{\rho}{\sqrt{d}}.
\end{align*}
Thus, by definition, we have
\[\sigma_{L/\widehat{L}}((a_t)_t) \ge \frac{\rho}{2\sqrt{dL/\widehat{L}}} \ge \frac{\rho}{2\sqrt{dL_{\max}}}.\]
Let $\sigma_0 = \rho/2\sqrt{dL_{\max}}$ and $\xi:= \gamma^{-1}/3\sqrt{L_{\max}}$.
Then, $\mu/(\gamma\xi)<\sigma_0 \le \sigma_{L/\widehat{L}}((a_t)_t)$, and 
\begin{align*}
&\frac{8 L_{\max}R^2 \log(4/\delta)}{\sigma_0^2(\xi-\lambda)^2}
        + \frac{36 L_{\max}^{5/2}\sup_{t \ge 1}|a_t|}{\sigma_0(\xi-\lambda)} \\
&\le \frac{32\xi^{-2} dL_{\max}^2 R^2 \log(4/\delta)}{\rho^2(1-r)^2} 
        + \frac{72\xi^{-1} \sqrt{d} L_{\max}^{3}\sup_{t \ge 1}|a_t|}{\rho(1-r)}\\
&\le \frac{72 d L_{\max}^2 R^2 \log(4/\delta)}{\rho^2(1-r)^2} 
        + \frac{108 \sqrt{d} L_{\max}^{3}\sup_{t \ge 1}|a_t|}{\rho(1-r)}.
\end{align*}
The last inequality follows from $\xi \ge 2/3$.
Thus, by Proposition \ref{prop: key prop for algorithm}, the algorithm finds $\beta > \widehat{L}$ in $m'$-th loop, and the output becomes an integer larger than $\widehat{L}$, which is contradiction.
\end{proof}

\section{Proof of Theorem \ref{thm:2}}
Here, we provide the proof of Theorem \ref{thm:2}.
\subsection{Proof}
Let
\begin{align*}
K &:=(M\theta, \dots, M^d\theta): \mathbb{C}^d \to W, \\
E_s(N) &:=\mathscr{W}\left( (E_{s,j})_{j=0}^{N-1} \right) : \mathbb{C}^{d} \to \mathbb{C}^d.
\end{align*}
For a linear map $\mathcal{M}:W \to W$, we define linear maps:
\begin{align*}
X(\mathcal{M})&:= 
\left(
\begin{array}{c}
\tilde{x}_1^{\top}\mathcal{M}^{1}  
\\\tilde{x}_2^{\top}\mathcal{M}^{2d+1} 
\\ \vdots \\ \tilde{x}_d^{\top}\mathcal{M}^{2(d-1)d +1 }
\end{array}
\right): W \to \mathbb{C}^d,\\
Q_N(\mathcal{M}) &:= \mathscr{W}\left((\mathcal{M}^{2d^2j})_{j=0}^{N-1}\right):W \to W.
\end{align*}
For $s=0,1$, we define linear maps on $\mathbb{C}^d$ by
\begin{align*}
A_s(N;\mathcal{M}) 
&:=
X(\mathcal{M})Q_N(\mathcal{M})\mathcal{M}^{sd + N-1}K
+ E_s(N),\\
B_s(N;\mathcal{M})
&:=
X(\mathcal{M}) Q_N(\mathcal{M}) \mathcal{M}^{sd + N-1} K.
\end{align*}
We note that $A_s(N;M_\theta)$ is identical to $A_s(N)$ defined in \eqref{AsN}.
We impose the following assumption on $X(M_\theta)$:
\begin{asm}\label{asm: kernel for X}
The kernel of the linear map $X(M_1)$ is the same as $\mathcal{N}(M_1)$.
\end{asm}
Note that this assumption holds with probability 1 if we randomly choose $\tilde{x}_1, \dots, \tilde{x}_d$ (see Lemma \ref{lem:null}).

The following {\em isomorphicity maintenance lemma} provides an explicit description of $B_0(N;M_1)^\dagger$.
\begin{lemma}[Isomorphicity maintenance lemma]
	\label{lem: explicit pinv B0}
Suppose Assumption \ref{asm: kernel for X} holds.  Assume $N\geq16L^2$.
    Let 
\begin{align*}
    U_1 &:= \mathscr{I}(B_0(N;M_1)) \subset \mathbb{C}^d,\\
    U_2 &:= \mathscr{N}(B_0(N;M_1)) \subset \mathbb{C}^d, \\
    U_3 &:= \mathscr{I}(M_1) = W_{=1} \subset W.
\end{align*}
Let $i:U_1 \to \mathbb{C}^d$ be the inclusion map and $p: \mathbb{C}^d \to U_2^{\perp}$ the orthogonal projection.
Then, restriction of $X(M_1)$ (resp. $M_1K$) to $U_3$ (resp. $U_2^\perp$) induces an isomorphism onto $U_1$ (resp. $U_3$). 
If we denote the isomorphism by $\tilde{X}$ (resp, $\tilde{K}$).
Then, $B_0(N;M_1)^\dagger$ is given by
\begin{align}
B_0(N;M_1)^\dagger = p^* \tilde{K}^{-1}
M_1|_{U_3}^{2-N}  
Q_N(M_1)|_{U_3}^{-1}
\tilde{X}^{-1}i^*. \nonumber
\end{align}
\end{lemma}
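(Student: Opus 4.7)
The plan is to decompose $B_0(N;M_1)$ into a chain of four linear maps, identify on which subspaces each is an isomorphism, and then apply the standard characterization of the Moore--Penrose pseudo-inverse via orthogonal projections onto $\mathscr{N}(B_0(N;M_1))^\perp$ and $\mathscr{I}(B_0(N;M_1))$.

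First I would establish the basic images and kernels of the factors. By a Cayley--Hamilton argument applied to $M_\theta$, the columns $M\theta,\ldots,M^d\theta$ of $K$ span the entire trajectory subspace $W$, so $\mathscr{I}(K)=W$. Corollary \ref{cor: decomposition of M} then gives $\mathscr{I}(M_1)=W_{=1}=U_3$ and $\mathscr{N}(M_1)=W_{<1}$, and makes $M_1|_{W_{=1}}$ an automorphism. Proposition \ref{prop: weyl sum general}(2) (using the hypothesis $N\ge 16L^2$) makes $Q_N(M_1)|_{W_{=1}}$ an automorphism of $W_{=1}$, and Assumption \ref{asm: kernel for X} combined with $W_{=1}\cap W_{<1}=\{0\}$ makes $X(M_1)|_{W_{=1}}$ injective.

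Using these, I would pinpoint $U_1$ and $U_2$ by chasing through the rewriting $B_0(N;M_1)=X(M_1)\,Q_N(M_1)\,M_1^{N-2}\,(M_1K)$ (which needs $N\ge 2$, implied by $N\ge 16L^2$). Every vector in $\mathscr{I}(M_1^{N-1}K)$ lies in $W_{=1}$, and on $W_{=1}$ the maps $Q_N(M_1)$ and $X(M_1)$ are injective; tracking images gives $U_1=X(M_1)(W_{=1})$, and tracking kernels gives $U_2=\mathscr{N}(M_1K)$. Therefore $\tilde{K}:=(M_1K)|_{U_2^\perp}\colon U_2^\perp\to U_3$ is injective with image $M_1(\mathscr{I}(K))=M_1(W)=W_{=1}=U_3$, and $\tilde{X}:=X(M_1)|_{U_3}\colon U_3\to U_1$ is an isomorphism by construction.

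Putting these together, the core restriction factors as
\begin{align*}
B_0(N;M_1)|_{U_2^\perp}\;=\;\tilde{X}\cdot Q_N(M_1)|_{U_3}\cdot (M_1|_{U_3})^{N-2}\cdot \tilde{K}\colon U_2^\perp\to U_1,
\end{align*}
which is a composition of four isomorphisms between finite-dimensional Hilbert spaces; its inverse is therefore $\tilde{K}^{-1}(M_1|_{U_3})^{2-N}Q_N(M_1)|_{U_3}^{-1}\tilde{X}^{-1}$. The proof then concludes by invoking the standard identity $T^\dagger = p^*T_0^{-1}i^*$ for the Moore--Penrose pseudo-inverse of any $T\colon\mathbb{C}^d\to\mathbb{C}^d$, where $T_0\colon\mathscr{N}(T)^\perp\to\mathscr{I}(T)$ is the induced isomorphism and $p,i$ are the canonical orthogonal projection and inclusion; substituting $T=B_0(N;M_1)$ yields exactly the claimed formula. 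The delicate point throughout is the kernel/image identification: one must carefully exploit that the $W_{<1}$ part is annihilated both by $M_1$ and, thanks to Assumption \ref{asm: kernel for X}, by $X(M_1)$, so that the whole composition collapses cleanly onto $W_{=1}$ where the Weyl-sum lower bound from Proposition \ref{prop: weyl sum general} is precisely what is needed to invert $Q_N(M_1)$.
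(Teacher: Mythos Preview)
Your proposal is correct and follows essentially the same route as the paper: both identify $U_1=\mathscr{I}(X(M_1)|_{U_3})$ and $U_2=\mathscr{N}(M_1K)$ using surjectivity of $K$ onto $W$ (via Cayley--Hamilton, which is the content of Proposition~\ref{image of K theta}), invertibility of $Q_N(M_1)|_{U_3}$ from Proposition~\ref{prop: weyl sum general}, and Assumption~\ref{asm: kernel for X}, and then conclude by the standard Moore--Penrose characterization (Proposition~\ref{prop: moore penrose}). Your write-up is simply more explicit about the factorization $B_0(N;M_1)|_{U_2^\perp}=\tilde{X}\,Q_N(M_1)|_{U_3}\,(M_1|_{U_3})^{N-2}\,\tilde{K}$ than the paper's terse argument.
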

\begin{proof}
    Since we have $\mathscr{N}(M_1)=\mathscr{N}(M_1^{r+1})$ for all $r \ge 0$ and Assumption \ref{asm: kernel for X},  surjectivity of $K$ by Proposition \ref{image of K theta}, and bijectivity of $Q_N(M_1)$ on $U_3$ by Proposition \ref{prop: weyl sum general}, we have
\begin{align*}
U_1&=\mathscr{I}(X(M_1)|_{U_3}),\\
U_2&=\mathscr{N}(M_1K).
\end{align*}
Thus, the restriction of $X(M_1)$ (resp. $K$) to $U_3$ (resp. $U_2^\perp$) induces an isomorphism onto $U_1$ (resp. $U_3$).
Let us denote the isomorphism by $\tilde{X}$ (resp. $\tilde{K}$).
Then, the last statement follows from Proposition \ref{prop: moore penrose}.
\end{proof}

\begin{lemma}\label{lem: target matrix}
Assume $N \ge 16L^2$.
Let
\begin{align}
    A := B_1(N;M_1)B_0(N;M_1)^\dagger.\nonumber
\end{align} 
Then, $A$ is independent of $N$ and its eigenvalues are zeros except for $(\theta,d)$-distinct eigenvalues of $M$.
\end{lemma}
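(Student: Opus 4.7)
My plan is to substitute the explicit formula for $B_{0}(N;M_{1})^{\dagger}$ given in Lemma~\ref{lem: explicit pinv B0} into $A = B_{1}(N;M_{1})\, B_{0}(N;M_{1})^{\dagger}$ and collapse every $N$-dependent factor by exploiting three observations: (i) $Q_{N}(M_{1})$ is a polynomial in $M_{1}$ and therefore commutes with every power $M_{1}^{r}$; (ii) by the very definition of $\tilde{K}$, the composition $M_{1} K p^{*} \tilde{K}^{-1}$ acts as the identity on $U_{3} = W_{=1}$; and (iii) by Proposition~\ref{prop:mulinear} and Corollary~\ref{cor: decomposition of M}, $M_{1}|_{W_{=1}}$ is diagonalizable with eigenvalues of magnitude $1$, so $M_{1}|_{U_{3}}$ is invertible, and Proposition~\ref{prop: weyl sum general} guarantees that $Q_{N}(M_{1})|_{U_{3}}$ is invertible whenever $N \ge 16 L^{2}$.

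\textbf{Telescoping.} Starting from
\begin{align*}
A \;=\; X(M_{1})\, Q_{N}(M_{1})\, M_{1}^{d+N-1}\, K\, p^{*}\, \tilde{K}^{-1}\, M_{1}|_{U_{3}}^{2-N}\, Q_{N}(M_{1})|_{U_{3}}^{-1}\, \tilde{X}^{-1}\, i^{*},
\end{align*}
I will push inputs through from right to left. For any $v \in \mathbb{C}^{d}$, the vector $Q_{N}(M_{1})|_{U_{3}}^{-1}\tilde{X}^{-1} i^{*} v$ lies in $U_{3}$, and then (ii) yields $M_{1}^{d+N-1} K p^{*} \tilde{K}^{-1} u = M_{1}|_{U_{3}}^{d+N-2}\, u$ for every $u \in U_{3}$. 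Using (iii), the product $M_{1}|_{U_{3}}^{d+N-2} M_{1}|_{U_{3}}^{2-N}$ collapses to $M_{1}|_{U_{3}}^{d}$, and using (i) the outer $Q_{N}(M_{1})$ commutes with $M_{1}|_{U_{3}}^{d}$ on $U_{3}$ and thus cancels $Q_{N}(M_{1})|_{U_{3}}^{-1}$. Since $X(M_{1})|_{U_{3}} = \tilde{X}$ by the construction in Lemma~\ref{lem: explicit pinv B0}, what remains is
\begin{align*}
A \;=\; \tilde{X}\, M_{1}|_{U_{3}}^{d}\, \tilde{X}^{-1}\, i^{*},
\end{align*}
which is visibly independent of $N$.

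\textbf{Spectrum of $A$.} Because $i^{*}$ is the orthogonal projection onto $U_{1}$, $A$ annihilates $U_{1}^{\perp}$, while its restriction to $U_{1}$ is conjugate via $\tilde{X}$ to $M_{1}|_{U_{3}}^{d}$. Consequently, every nonzero eigenvalue of $A$ is an eigenvalue of $M_{1}^{d}|_{W_{=1}}$. By Proposition~\ref{prop:mulinear} and Corollary~\ref{cor: decomposition of M}, the eigenvalues of $M_{1}|_{W_{=1}}$ are exactly the eigenvalues $\alpha$ of $M$ with $|\alpha|=1$ and $\theta_{\alpha} \neq 0$; raising to the $d$-th power yields precisely the $(\theta,1,d)$-distinct eigenvalues of $M$, with all other eigenvalues of $A$ equal to zero.

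\textbf{Main obstacle.} The delicate point is the careful bookkeeping of which subspace each factor maps between -- $\mathbb{C}^{d}$, $U_{1}$, $U_{2}^{\perp}$, $U_{3} = W_{=1}$, and $W$ all appear -- so that each restriction and cancellation above is genuinely legitimate rather than formal. In particular, one must verify that the intermediate outputs actually land in $U_{3}$ before invoking the commutation $[Q_{N}(M_{1}), M_{1}^{r}] = 0$ and the invertibility of $M_{1}|_{U_{3}}$. A secondary subtle point is matching the eigenvalues of $M_{1}^{d}|_{W_{=1}}$ to the paper's definition of $(\theta,1,d)$-distinct eigenvalues, which relies on the identification $W_{\alpha} = W \cap V_{\alpha}$ between generalized eigenspaces of $M_{\theta}$ and $M$ that is implicit in the proof of Proposition~\ref{prop:mulinear}.
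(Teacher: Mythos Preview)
Your proposal is correct and follows essentially the same route as the paper's proof: you invoke Lemma~\ref{lem: explicit pinv B0} for the explicit pseudo-inverse and then telescope the $N$-dependent factors $Q_N(M_1)$ and $M_1^{N}$ against their inverses on $U_3$, arriving at the same $N$-free expression $i\,\tilde{X}\,M_1|_{U_3}^{\,d}\,\tilde{X}^{-1}\,i^{*}$ (your final display omits the inclusion $i$ in front of $\tilde{X}$, but since $\tilde{X}$ has codomain $U_1\subset\mathbb{C}^d$ this is only a notational slip). The paper's own proof simply asserts this identity in one line, whereas you spell out the commutation and cancellation steps explicitly; the spectral conclusion is identical.
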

\begin{proof}
We use the notation as in Lemma \ref{lem: explicit pinv B0}.
By Lemma \ref{lem: explicit pinv B0}, we obtain 
\[B_1(N;M_1)B_0(N;M_1)^\dagger = i \tilde{X}M_1^d\tilde{X}^{-1}i^*, \]
which is independent of $N$ and its eigenvalues are zeros except for $(\theta, d)$-distinct eigenvalues of $M$.
\end{proof}

The following result will be used in the proof of Theorem \ref{thm:2} (but not essential).
\begin{lemma} \label{shobolemma}
    We have
    \begin{align*}
        \|X(M_1)\| & \le \kappa\sqrt{d}, \\
        \|X(M_{<1})\| &\le \kappa (d+1)2^d.
    \end{align*}
\end{lemma}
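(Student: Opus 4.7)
The proof is a direct row-by-row estimate combined with the Jordan-form decomposition of $M$ already used in Proposition~\ref{prop: weyl sum general}. Since row $k$ of $X(\mathcal M)$ equals $\tilde x_k^\top \mathcal M^{r_k}$ with $r_k := 2(k-1)d+1$, and each $\tilde x_k$ is a unit vector, we have $\|\tilde x_k^\top \mathcal M^{r_k}\| \le \|\mathcal M^{r_k}\|$, and hence the Frobenius bound
\[
\|X(\mathcal M)\| \le \|X(\mathcal M)\|_F \le \Big(\textstyle\sum_{k=1}^d \|\mathcal M^{r_k}\|^2\Big)^{1/2}.
\]
So everything reduces to bounding $\|\mathcal M^{r_k}\|$ for $\mathcal M \in \{M_1, M_{<1}\}$ using the Jordan decomposition $M = PJP^{-1}$ with $\|P\|\|P^{-1}\| \le \kappa$.

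For $M_1$, I would note that by Corollary~\ref{cor: decomposition of M}, $M_1$ is diagonalizable and all its eigenvalues lie on the unit circle; moreover, extending $M_1$ to $\tilde M_1 = iM_1p$ on $\mathbb C^d$, the corresponding block of $J$ is simply a diagonal matrix $D_1$ with unimodular entries. Therefore $\|M_1^r\| = \|\tilde M_1^r\| \le \kappa \|D_1^r\| = \kappa$ for every $r \ge 0$, uniformly in the power. Substituting into the Frobenius bound immediately gives $\|X(M_1)\| \le \sqrt{d}\,\kappa$.

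For $M_{<1}$, the same extension trick shows $\|M_{<1}^r\| \le \kappa\|J_{<1}^r\|$, where $J_{<1}$ is a direct sum of Jordan blocks of size at most $d$ with eigenvalues $|\lambda| \le 1-\Delta \le 1$. For a single block $J_\lambda = \lambda I + N$ with $N^s = 0$ and $\|N^j\| \le 1$, the binomial expansion yields
\[
\|J_\lambda^r\| \le \sum_{j=0}^{\min(r,d-1)}\binom{r}{j}|\lambda|^{r-j},
\]
which under $|\lambda| \le 1$ is at most $(1+|\lambda|)^r \le 2^r$ when $r \le d-1$, and admits a polynomial-times-exponential bound exploiting $|\lambda|<1$ when $r \ge d$. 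A careful combination of these two regimes across the $d$ rows (together with the accounting of the Jordan-block sizes summing to at most $d$) gives the stated $\kappa(d+1)2^d$.

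The main obstacle is producing the clean, $\Delta$-free constant $(d+1)2^d$ for $X(M_{<1})$: a naive application of the binomial bound yields $\|J_\lambda^{r_k}\| \le 2^{r_k}$, which is vacuously large since $r_k$ can be as large as $2(d-1)d+1$. One must therefore exploit the strict inequality $|\lambda| < 1$ to cancel binomial growth against exponential decay — essentially balancing $\binom{r_k}{d-1}|\lambda|^{r_k-d+1}$ — while being careful that the envelope constant collapses to something depending only on $d$. I expect the slickest route is to bound the binomial sum through $(1+|\lambda|)^{r_k}$ combined with a monotonicity argument, leveraging that the worst case over $k$ already occurs at $k=1$ where $r_1=1$ and the bound is elementary.
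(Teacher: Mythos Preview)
Your argument for the first bound $\|X(M_1)\|\le\kappa\sqrt d$ is correct and essentially the paper's: $M_1$ is diagonalizable with unimodular eigenvalues, so $\|M_1^r\|\le\kappa$ for all $r$, and the $d$ rows give the factor $\sqrt d$.

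For the second bound your proposal has a genuine gap. You correctly pinpoint the obstacle: the row exponents $r_k=2(k-1)d+1$ run up to $2d^2-2d+1$, so the crude estimate $\|J_\lambda^{r_k}\|\le(1+|\lambda|)^{r_k}\le 2^{r_k}$ is useless. But neither of your suggested fixes closes the argument. Balancing $\binom{r_k}{d-1}|\lambda|^{r_k-d+1}$ against exponential decay necessarily brings in the spectral gap $\Delta$, which is precisely what the $\Delta$-free constant $(d+1)2^d$ rules out. And the claim that ``the worst case over $k$ already occurs at $k=1$'' is false: for $|\lambda|$ close to $1$ the binomial factors dominate and the largest exponent $r_d$ produces the largest norm, not the smallest. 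So your row-by-row Frobenius route, as written, cannot deliver the stated constant.

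The paper's proof proceeds differently. It does not try to control each $\|M_{<1}^{r_k}\|$ separately; instead it majorizes the entrywise magnitudes of the Jordan powers by those of $(I+J)^r$ (with $J$ the $d\times d$ nilpotent shift) and then evaluates the combinatorial double sum
\[
\sum_{r}\sum_{j=0}^{d}\binom{r}{j}(d-j)
\]
in closed form via the hockey-stick identity $\sum_{r=1}^d\binom{r}{j}=\binom{d+1}{j+1}$, obtaining $(d+1)2^d-(d+1)<(d+1)2^d$. The point is that the clean constant comes from a global combinatorial identity after replacing $|\lambda|$ by $1$, not from estimating each power individually. (The paper's write-up contains evident typos --- the left-hand side is printed as $\|X(M_1)\|$, the exponent on $(I+J)$ is missing, and the range of $r$ being summed is not reconciled with the actual exponents $r_k$ --- so you will need to reconstruct the intended bookkeeping rather than follow it line by line.)
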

\begin{proof}
Considering the Jordan normal form, the first inequality is obvious.
As for the second inequality, we define $J \in \mathbb{R}^{d\times d}$  by the nilpotent matrix
\[J := \left(
\begin{array}{cccc}
0&1&&O \\
&\ddots&\ddots&\\
&&&1\\
O&&&0
\end{array}
\right).
\]
Then, we have
\begin{align*}
    \kappa^{-1}\|X(M_1)\| &\le \sum_{r=1}^d\| (I + J) \|_1 \\
    & = \sum_{r=1}^d \sum_{j=0}^d \binom{r}{j}(d-j)\\
    & = \sum_{j=0}^d (d-j) \sum_{r=1}^d \binom{r}{j}\\
    & = d + \sum_{j=1}^d (d-j) \sum_{r=1}^d \binom{r}{j}\\
    & = d + \sum_{j=1}^d (d-j)  \binom{r+1}{j+1}\\
    & = d + (d+1)2^{d} -(d+1)d -(d+1)\\
    & < (d+1)2^d.
\end{align*}

\end{proof}

\begin{proof}[Proof of Theorem \ref{thm:2}]
Let $A$ be the matrix introduced in Lemma \ref{lem: target matrix}.
Let 
\[\gamma(N) :=\frac{\sqrt{4d^2R^2\log(4d^2/\delta)}+1}{\sqrt{N}}.\]
Let $M_1$ and $M_{<1}$ be matrices as in Corollary \ref{cor: decomposition of M}.
Then, by Proposition \ref{prop:mulinear}, we see that
\begin{align*}
    A_s(N,M) &= A_s(N, M_1) + A_s(N, M_{<1}),\\
    B_s(N,M) &= B_s(N, M_1) + B_s(N, M_{<1}).
\end{align*}
We denote by $\hat{A}_s(N;M)$ (resp, $\hat{B}_s(N;M)$) the low rank approximation via the singular value threshold $\gamma(N)$ (see Definition \ref{def: lower rank approximation}).
By direct computations, we have
\begin{align}
\nonumber    &\|A - A_1(N;M)\hat{A}_0(N;M)^{\dagger}\|\\ 
\nonumber    &\le \|A_1(N;M)\|\cdot \|\hat{A}_0(N;M)^\dagger - B_0(N;M_1)^\dagger\|  + \|A_1(N;M) - B_1(N;M_1)\|\cdot\|B_0(N;M_1)^\dagger\|\\
\nonumber    & \le  (\|B_1(N;M_1)\| + \|B_1(N;M_{<1})\| + \|E_1(N)\|) \cdot \|\hat{A}_0(N;M)^\dagger - B_0(N;M_1)^\dagger\|  \\
    &\qquad+ \|B_1(N;M_{<1}) + E_1(N)\|\cdot\|B_0(N;M_1)^\dagger\|.\nonumber
\end{align}

By Proposition \ref{prop: weyl sum general}, for $s=0,1$ and for $N\geq\max\{2d,16L^2\}$, we have
\begin{align}
    \|B_s(N;M_1)\| &\le \|X(M_1)\|\cdot\|M_1^{N+sd-1}Q_N(M_1)\| \cdot \|K\|\nonumber\\
                    & \le \kappa C_{\rm ws}(L)\|X(M_1)\|\|K\|\nonumber\\
                    & \le Bd\kappa^2 C_{\rm ws}(L)\nonumber\\
    \|B_s(N;M_{<1})\| &\le \|X(M_{<1})\|\cdot\|M_{<1}^{N+sd-1}Q_N(M_{<1})\| \cdot \|K\| \nonumber\\
                        &\le \|X(M_{<1})\|\cdot\|K\| \cdot \frac{d^2\kappa e^{-\Delta(N+sd-d)}}{N\Delta^{d-1}}\nonumber \\
                        &\le  B\kappa\sqrt{d}(d+1)2^d \cdot \frac{d^2\kappa e^{-\Delta(N+sd-d)}}{N\Delta^{d-1}}\nonumber\\
                        &\le  \frac{B\kappa^2 e^{d + 6 -\Delta(N+sd-d)}}{N\Delta^{d-1}},\nonumber
\end{align}
where we used $\|K\|\le \sqrt{d}B$ (Assumption \ref{asm:dyn sys}),  and $\|X(M_{<1})\| \le \kappa (d+1)2^d$ (Lemma \ref{shobolemma}), and $\sqrt{d}(d+1)d^2 2^d < e^{d+6}$.
By Lemma \ref{lem: explicit pinv B0} with Proposition \ref{prop: weyl sum general}, we see that
\begin{align}
    \|B_0(N;M_1)^\dagger\| \le \kappa C_{\rm ws}(L) \|\tilde{X}^{-1}\|\cdot \|\tilde{K}^{-1}\|.\nonumber
\end{align}

By using Lemma \ref{lem:2} and union bounds, we obtain
\begin{align}
\max(\|E_0(N)\|_F, \|E_1(N)\|_F) \leq \gamma(N) - \frac{1}{\sqrt{N}}, \nonumber
\end{align}
with probability at least $1-\delta$. Assume that 
\[N \ge \frac{-(d-1)\log\Delta}{\Delta} + \frac{\log(B\kappa^2) + d + 6}{\Delta} + d.\]
Then, we see that $\|B_s(N; M_{<1})\| \le 1/N$.
Thus, by Lemma \ref{lem:perturbinverse}, with probability at least $1-\delta$, we have
\begin{align*}
     \|\hat{A}_0(N;M)^\dagger - B_0(N;M_1)^\dagger\| 
     &\le 8 (\|E_0(N)\| + \|B_0(N;M_{<1}) \|)\cdot \|B_0(N,M_1)^\dagger\|^2 (\sqrt{d} + 1) \\
     &\le 8 \gamma(N) \cdot \|B_0(N,M_1)^\dagger\|^2 (\sqrt{d} + 1) \\
     &\le 8 (1+\sqrt{d}) \kappa^2C_{\rm ws}(L)^2 \|\tilde{X}^{-1}\|^2 \cdot \|\tilde{K}^{-1}\|^2\gamma(N).
\end{align*}

Therefore, there exists $C>0$ depending on $\|X(M)\|$, $\|K\|$, $\kappa$, $C_{\rm ws}(L)$, $\|\tilde{X}^{-1}\|$, $\|\tilde{K}^{-1}\|$, and $d$ such that
\[
\|A - A_1(N;M)\hat{A}_0(N;M)^{\dagger}\| < C \left(\frac{R^2\left(\log\left(1/\delta\right)+1\right)+1}{\sqrt{N}}\right),
\]
with probability at least $1-\delta$.
\end{proof}

\subsection{Miscellaneous}
We provide an expression of Moore-Penrose pseudo inverse:
\begin{proposition}\label{prop: moore penrose}
Let $A: \mathbb{C}^m \to \mathbb{C}^n$ be a linear map.
Let $i: \mathscr{I}(A) \to \mathbb{C}^n$ be the inclusion map and let $p: \mathbb{C}^m \to \mathscr{N}(A)^{\perp}$ be the orthogonal projection.
Let $\tilde{A}:= A|_{\mathscr{N}(A)}: \mathscr{N}(A)^{\perp} \to \mathscr{I}(A)$ be an isomorphism.
Then, the Moore-Penrose pseudo inverse $A^\dagger$ coincides with $i^* \tilde{A}^{-1} p^*$.
\end{proposition}
\begin{proof}
Let $B := p^* \tilde{A}^{-1} i^*$.
We remark that $A = i\tilde{A}p$, $i^*i={\rm id}, pp^*={\rm id}$, we see that $ABA = A$, $BAB=B$, $(AB)^* = AB$, and $BA = (BA)^*$.
By the uniqueness of Moore-Penrose pseudo inverse, $B=A^\dagger$.
\end{proof}

\begin{proposition}\label{image of K theta}
Let $A \in \mathbb{C}^{d\times d}$ be a matrix and let $v \in \mathbb{C}^d$ be a vector.
Let $V \subset \mathbb{C}^d$ be a linear subspace generated by $\{A^jv\}_{j=1}^\infty$.
Then, $V = \mathscr{I}\left((Av, A^2v, \dots, A^dv)\right)$.
\end{proposition}
\begin{proof}
Put $W = \mathscr{I}\left((Av, A^2v, \dots, A^dv)\right)$.
The inclusion $W \subset V$ is obvious, we prove the opposite inclusion.
It suffices to show that $A^jv \in W$ for any positive integer $j > d$.  
By the Cayley-Hamilton theorem, $A^d=\sum_{j=1}^dc_jA^{d-j}$ for some $c_j \in \mathbb{C}$.
Thus, by induction $A^j$ is a linear combination of $A,A^2, \dots, A^d$, namely, $A^jv \in W$.
\end{proof}

We give several lemmas here.
\begin{lemma}[Perturbation bounds of the Moore-Penrose inverse]
	\label{lem:perturbinverse}
	Suppose $A\in\C^{d\times d}$ is a matrix.
	Let $E\in\C^{d\times d}$ be a matrix satisfying
	\begin{align}
	\exists C>0,~~\forall N\in\Z_{>0},~~\|E\|\leq C\frac{1}{\sqrt{N}},\nonumber
	\end{align}
	and let $\hat{A}_E\in\C^{d\times d}$ be the low rank approximation of $A+E$ via SVD with the singular value threshold $C/\sqrt{N}$.
	Then, we obtain
	\begin{align}
	\left\|A^\dagger-\hat{A}_E^\dagger\right\|\leq \frac{8C\|A^\dagger\|^2\left(\sqrt{d}+1\right)}{\sqrt{N}}.\nonumber
	\end{align}
\end{lemma}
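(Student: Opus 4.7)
The goal is to bound $\|A^\dagger - \hat{A}_E^\dagger\|$, and the approach is to compare $A$ directly to its nearby truncation $\hat{A}_E$ via Weyl's inequality and then apply a Wedin-type perturbation bound for the Moore--Penrose pseudoinverse under matched ranks. First I would establish $\|A-\hat{A}_E\|_2 \le 2C/\sqrt{N}$: by construction of the singular-value-threshold truncation, all singular values of $(A+E)-\hat{A}_E$ lie below the threshold $C/\sqrt{N}$, so $\|(A+E)-\hat{A}_E\|_2 \le C/\sqrt{N}$; combining with $\|E\|_2 \le C/\sqrt{N}$ and the triangle inequality gives the claim.

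Second, I would verify that $\mathrm{rank}(\hat{A}_E)=\mathrm{rank}(A)$ in the regime where the target bound is nontrivial. By Weyl's inequality, $|\sigma_i(A+E)-\sigma_i(A)| \le \|E\|_2 \le C/\sqrt{N}$, so any nonzero $\sigma_r(A)>2C/\sqrt{N}$ yields $\sigma_r(A+E)>C/\sqrt{N}$ and survives the cut, while each zero singular value of $A$ corresponds to a singular value of $A+E$ at most $C/\sqrt{N}$, which is killed. In the alternative regime where some nonzero $\sigma_r(A) \le 2C/\sqrt{N}$, one has $\|A^\dagger\|_2 \gtrsim \sqrt{N}/(2C)$, which already makes the right-hand side of the target inequality of order $\sqrt{N}$, so the bound holds trivially.

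With matching ranks secured, I would invoke a Wedin-type estimate for equal-rank matrices in the form
\[
\|A^\dagger - B^\dagger\|_2 \,\le\, \mu\, \max\!\bigl(\|A^\dagger\|_2^2,\, \|B^\dagger\|_2^2\bigr)\, \|A-B\|,
\]
valid with an absolute constant $\mu$ (see \citet{wedin1973perturbation, meng2010optimal}). Taking $B=\hat{A}_E$, Mirsky's inequality applied to $\|A-\hat{A}_E\|_2 \le 2C/\sqrt{N}$ controls the smallest nonzero singular value of $\hat{A}_E$ from below by roughly $\tfrac{1}{2}/\|A^\dagger\|_2$, so that $\|\hat{A}_E^\dagger\|_2 \le 2\|A^\dagger\|_2$ in the nontrivial regime. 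Combining this with $\|A-\hat{A}_E\|_2 \le 2C/\sqrt{N}$ and using the passage $\|\cdot\|_F \le \sqrt{d}\|\cdot\|_2$ (applied where the Wedin bound is naturally stated in Frobenius norm) yields an estimate of the shape $\mu \cdot 4\|A^\dagger\|^2 \cdot (\sqrt{d}+1) \cdot 2C/\sqrt{N}$, producing the announced constant $8(\sqrt{d}+1)$ after bookkeeping.

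The main obstacle is constant-tracking: obtaining the precise factor $8(\sqrt{d}+1)$, rather than some larger $\sqrt{d}$-dependent constant, demands a careful choice between spectral and Frobenius norms in the Wedin step and a careful comparison of $\|\hat{A}_E^\dagger\|_2$ to $\|A^\dagger\|_2$. A secondary subtlety is making the rank-preservation argument watertight in the ``boundary'' regime where some singular value of $A$ is comparable to the threshold; this is handled by observing that the target inequality degenerates to a trivial bound in that case.
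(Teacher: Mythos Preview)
Your approach is essentially the paper's, but you take two detours the paper avoids. First, the paper does \emph{not} argue rank preservation: it invokes the Wedin--Meng--Zheng bound in its general form with constant $\tfrac{1+\sqrt{5}}{2}$, which holds regardless of whether $\mathrm{rank}(\hat{A}_E)=\mathrm{rank}(A)$, so your entire second step is unnecessary. Second, rather than arguing that the bound ``holds trivially'' when $\sigma_{\min}(A)\le 2C/\sqrt{N}$, the paper obtains $\|\hat{A}_E^\dagger\|\le 2/\sigma_{\min}$ uniformly by a direct case split: when $N\le 4C^2/\sigma_{\min}^2$, the thresholding itself guarantees that every retained singular value of $\hat{A}_E$ is at least $C/\sqrt{N}$, so $\|\hat{A}_E^\dagger\|\le \sqrt{N}/C\le 2/\sigma_{\min}$; when $N>4C^2/\sigma_{\min}^2$, Weyl gives $\hat{\sigma}_{\min}\ge \sigma_{\min}/2$. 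Your ``trivial regime'' argument is incomplete as written because you never bound $\|\hat{A}_E^\dagger\|$ there; the threshold observation $\|\hat{A}_E^\dagger\|\le \sqrt{N}/C$ is exactly what closes that gap. Finally, the $(\sqrt{d}+1)$ factor in the paper does not come from a spectral-to-Frobenius conversion of $\|A-\hat{A}_E\|$; it comes directly from $\|\hat{A}_E-A\|\le \|E\|+\sqrt{d}\,C/\sqrt{N}\le(\sqrt{d}+1)C/\sqrt{N}$, where the $\sqrt{d}$ bounds the (Frobenius) norm of the discarded sub-threshold part of $A+E$. Plugging this and $\|\hat{A}_E^\dagger\|\le 2/\sigma_{\min}$ into the general Wedin bound yields $2(1+\sqrt{5})(\sqrt{d}+1)C\|A^\dagger\|^2/\sqrt{N}\le 8(\sqrt{d}+1)C\|A^\dagger\|^2/\sqrt{N}$ with no further bookkeeping.
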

\begin{proof}
    Let $\sigma_{\min} = \|A^\dagger\|^{-1}$ be the minimal singular value of $A$.
	From \cite[Theorem 1.1]{meng2010optimal} (or originally \cite{wedin1973perturbation}) and from the fact
	\begin{align}
	\left\|\hat{A}_E-A\right\|\leq\|E\|+\frac{\sqrt{d}C}{\sqrt{N}},\nonumber
	\end{align}
	we obtain
	\begin{align}
	\left\|A^\dagger-\hat{A}_E^\dagger\right\|\leq\frac{1+\sqrt{5}}{2}\max\left\{\left\|A^\dagger\right\|^2,\left\|\hat{A}_E^\dagger\right\|^2\right\}\left(\sqrt{d}+1\right)\frac{C}{\sqrt{N}}.\nonumber
	\end{align}
	For $N$ such that $1\leq N\leq 4C^2/\sigma_{\rm min}^2$, we have $\left\|\hat{A}_E^\dagger\right\|\leq \sqrt{N}/C \leq 2/\sigma_{\rm min}$.
	Suppose singular values $\sigma_k,~k\in[d]$, of $A$, and $\hat{\sigma}_k,~k\in[d]$, of $A+E$ are sorted in descending order. 
	Then, it holds that 
	\begin{align}
	|\sigma_k-\hat{\sigma}_k|\leq \left\|E\right\|.\nonumber
	\end{align}
	Therefore, for $N>4C^2/\sigma_{\rm min}^2$, the minimum singular value $\hat{\sigma}_d$ is greater than $\sigma_{\rm min}/2$.
	In this case, $\hat{A}_E=A+E$, and it follows that $\left\|\hat{A}_E^\dagger\right\|\leq 2/\sigma_{\rm min}$.
	Hence, for all $N\geq1$, we obtain
	\begin{align}
	\left\|\hat{A}_E^\dagger\right\|\leq \frac{2}{\sigma_{\rm min}},\nonumber
	\end{align}
	from which, it follows that
	\begin{align}
	\left\|A^\dagger-\hat{A}_E^\dagger\right\|\leq \frac{2C\left(1+\sqrt{5}\right)\left(\sqrt{d}+1\right)}{\sigma_{\rm min}^2\sqrt{N}}
	\leq\frac{8C\|A^\dagger\|^2\left(\sqrt{d}+1\right)}{\sqrt{N}}.\nonumber
	\end{align}
\end{proof}

\begin{lemma}[Null space of random matrix]
	\label{lem:null}	
	Suppose $M_{k}\in\R^{d\times d},~k\in[d]$, have the same null space.
	Then, the null space of
	\begin{align}
	X&:=\left(
	\begin{array}{c}
	x_1^\top M_1\\
	x_2^\top M_2\\
	\vdots\\
	x_d^\top M_d\\
	\end{array}
	\right),\nonumber
	\end{align}
	where $x_k,~k\in[d]$, are unit vectors independently drawn from the uniform distribution over the unit hypersphere, is the same as those of $M_{k}$ with probability one.
\end{lemma}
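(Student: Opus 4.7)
The plan is to show that almost surely $\ker(X) = N$, where $N := \ker(M_k)$ is the common null space. The inclusion $N \subseteq \ker(X)$ is immediate, since $M_k v = 0$ forces $x_k^\top M_k v = 0$ for every $k$. For the reverse inclusion, let $r := d - \dim N$, which equals $\mathrm{rank}(M_k)$ for every $k$ by the shared-null-space hypothesis. It suffices to prove $\mathrm{rank}(X) = r$ almost surely, because then $\dim \ker(X) = d - r = \dim N$, and combined with $N \subseteq \ker(X)$ this forces equality. Each row $x_k^\top M_k$ lies in the row space of $M_k$, which equals $\ker(M_k)^\perp = N^\perp$ (of dimension $r$), so $\mathrm{rank}(X) \le r$ holds deterministically; the task is to establish the matching lower bound almost surely.

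My approach is an induction on $k$ showing that the partial span $V_k := \mathrm{span}(x_1^\top M_1, \ldots, x_k^\top M_k) \subseteq N^\perp$ satisfies $\dim V_k = \min(k, r)$ almost surely; taking $k = r \le d$ gives $V_r = N^\perp$ and hence $\mathrm{rank}(X) \ge r$. The base case $k=1$ reduces to $M_1^\top x_1 \ne 0$, which holds with probability one whenever $M_1 \ne 0$ (the case $r=0$ being trivial), since $\ker(M_1^\top)$ is then a proper linear subspace of $\mathbb{R}^d$ and meets the sphere in a set of surface measure zero. In the inductive step, I condition on $x_1, \ldots, x_k$ producing some $V_k$ with $\dim V_k = k < r$, and show that $\Pr\!\left(x_{k+1}^\top M_{k+1} \in V_k \mid x_1, \ldots, x_k\right) = 0$. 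The set $\{x \in \mathbb{R}^d : M_{k+1}^\top x \in V_k\}$ is the preimage of $V_k$ under the linear map $x \mapsto M_{k+1}^\top x$, whose image is the row space of $M_{k+1}$, namely $N^\perp$. Because $V_k$ is a proper subspace of $N^\perp$, this preimage is a proper linear subspace of $\mathbb{R}^d$; hence it meets the unit sphere in a surface-measure-zero set, and the conditional probability vanishes. The tower property then yields $\dim V_{k+1} = k+1$ almost surely, closing the induction.

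The only delicate point is that $V_k$ is itself a random subspace, so one cannot invoke a fixed-subspace measure-zero statement directly; this is precisely what is handled by the conditioning step above. Aside from that, the argument uses nothing beyond the standard fact that a proper linear subspace of $\mathbb{R}^d$ has zero measure under the uniform distribution on $S^{d-1}$, together with the identification $\ker(M_k)^\perp = $ row space of $M_k$.
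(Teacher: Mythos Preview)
Your proof is correct and follows essentially the same inductive approach as the paper: both arguments build up the row span one step at a time within $N^\perp$, using that a uniformly random $x_k$ avoids any fixed proper linear subspace almost surely. Your treatment is in fact slightly more careful---you handle the randomness of the intermediate span $V_k$ via conditioning and the tower property, whereas the paper's phrase ``by union bound'' is informal (one cannot union-bound over uncountably many subspaces), but the underlying idea is identical.
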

\begin{proof}
	Given any $k-1$ dimensional linear subspace in $\mathscr{N}(M_k)^\perp$ for any $k\in[d]$, it holds that the probability that $x_k^\top M_k$ lies on that space is zero.
	Therefore, by union bound, and by the fact that the row space is the orthogonal complement of the null space, we obtain the result.
\end{proof}
\section{Azuma-Hoeffding inequality for exponential sum}\label{section:technical}

\begin{lemma}
	\label{lem:2}
		Let $\{X_j\}_{j=1}^{n}$ for $n\in\Z_{>0}$ be sub-Gaussian martingale difference with variance proxy $R^2$ and a filtration $\{\mathcal{F}_j\}$.  Also, let $\{a_j\}\subset\C$ be a sequence of complex numbers satisfying $|a_j|\leq 1$ for all $j\in[n]$.  Then, the followings hold, where $*[\cdot]$ stands for $\Re[\cdot]$ or $\Im[\cdot]$.
		\begin{align}
		&\Pr\left[\frac{1}{n}\left|*\left[\sum_{j=1}^{n}a_jX_j\right]\right|\leq \sqrt{\frac{2R^2\log\left(2/\delta\right)}{n}}\right]\geq 1-\delta,\nonumber\\
		&\Pr\left[\frac{1}{n}\left|\sum_{j=1}^{n}a_jX_j\right|\leq \sqrt{\frac{4R^2\log\left(4/\delta\right)}{n}}\right]\geq 1-\delta.\nonumber
		\end{align}
\end{lemma}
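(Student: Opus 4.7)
The plan is to reduce the complex-weighted martingale inequality to the classical real-valued Azuma--Hoeffding bound for sub-Gaussian martingale differences. Write $a_j = \alpha_j + i\beta_j$ with $\alpha_j,\beta_j \in \R$. Since $X_j$ is real-valued, $\Re\bigl[\sum_j a_j X_j\bigr] = \sum_j \alpha_j X_j$ and $\Im\bigl[\sum_j a_j X_j\bigr] = \sum_j \beta_j X_j$, and the constraint $|a_j|\le 1$ immediately gives $|\alpha_j|, |\beta_j| \le 1$.

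Next I would establish the real-valued concentration for $S_n := \sum_{j=1}^n \alpha_j X_j$. Because $\alpha_j$ is $\mathcal{F}_{j-1}$-measurable (deterministic in the use cases) and $X_j$ is conditionally $R$-sub-Gaussian, the product $\alpha_j X_j$ is conditionally $|\alpha_j|R$-sub-Gaussian. The standard MGF tower-property argument then yields, for every $\lambda \in \R$,
\begin{align*}
\Exp\!\left[e^{\lambda S_n}\right] \;\le\; \exp\!\left(\frac{\lambda^2 R^2}{2}\sum_{j=1}^n \alpha_j^2\right) \;\le\; \exp\!\left(\frac{\lambda^2 n R^2}{2}\right),
\end{align*}
where the last step uses $\alpha_j^2 \le 1$. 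Chernoff's inequality with the optimal $\lambda = t/(nR^2)$ gives $\Pr[S_n > t] \le \exp(-t^2/(2nR^2))$, and a symmetric argument for $-S_n$ followed by a union bound yields $\Pr[|S_n|>t]\le 2\exp(-t^2/(2nR^2))$. Setting this quantity equal to $\delta$ and solving for $t$ produces $t = \sqrt{2nR^2 \log(2/\delta)}$, which upon dividing by $n$ is exactly the first inequality (the same argument applies verbatim with $\beta_j$ in place of $\alpha_j$, covering both $\Re$ and $\Im$).

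For the second inequality, I would use the elementary bound $|z|^2 = (\Re z)^2 + (\Im z)^2 \le 2\max\bigl((\Re z)^2, (\Im z)^2\bigr)$, i.e.\ $|z| \le \sqrt{2}\max(|\Re z|, |\Im z|)$. Apply the first inequality to both $\Re\bigl[\sum a_j X_j\bigr]$ and $\Im\bigl[\sum a_j X_j\bigr]$ with failure probability $\delta/2$ each; the union bound then guarantees that, with probability at least $1-\delta$,
\begin{align*}
\frac{1}{n}\left|\sum_{j=1}^n a_j X_j\right| \;\le\; \sqrt{2}\cdot\sqrt{\frac{2R^2\log(4/\delta)}{n}} \;=\; \sqrt{\frac{4R^2\log(4/\delta)}{n}},
\end{align*}
which is the desired bound.

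There is no real obstacle here; the only subtle point is verifying that the conditional MGF bound for $\alpha_j X_j$ is controlled by $\tfrac12 \lambda^2 R^2 \alpha_j^2$ rather than $\tfrac12 \lambda^2 R^2$, so that the bound on $\sum \alpha_j^2$ (as opposed to just $n$) can be invoked; this is immediate from the definition of the sub-Gaussian MGF applied with parameter $\lambda \alpha_j$. The rest is standard Chernoff--union-bound bookkeeping with the factor $\sqrt{2}$ picked up from passing between the modulus and its real/imaginary parts.
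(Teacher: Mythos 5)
Your proposal is correct and follows essentially the same route as the paper: a conditional-MGF/tower-property argument giving $\Exp[e^{\lambda \Re[\sum_j a_j X_j]}]\le e^{n\lambda^2R^2/2}$, a Chernoff bound for each of the real and imaginary parts, and then a union bound (the paper phrases it as the Fr\'echet inequality) combined with $|z|\le\sqrt{2}\max(|\Re z|,|\Im z|)$ to pass to the modulus. The only cosmetic difference is that you track $\sum_j\alpha_j^2$ before bounding it by $n$, whereas the paper bounds each conditional MGF by $e^{\lambda^2R^2/2}$ directly using $|\Re[a_j]|\le 1$.
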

\begin{proof}
	For a filtration $\{\mathcal{F}_i\}_{i\leq n}$, we have
	\begin{align}
	\Exp\left[e^{\lambda\Re\left[\sum_{j=1}^{n}a_jX_j\right]}\right]\leq
	\Exp\left[e^{\lambda\Re\left[\sum_{j=1}^{n-1}a_jX_j\right]}\Exp\left[e^{\lambda\Re\left[a_nX_{n}\right]}\big\vert \mathcal{F}_{n-1}\right]\right]
	\leq e^{\frac{\lambda^2R^2}{2}}\Exp\left[e^{\lambda\Re\left[\sum_{j=1}^{n-1}a_jX_j\right]}\right],\nonumber
	\end{align}
	where the first inequality follows from the assumption of filtration, and the second inequality follows from 
	\begin{align}
	\Exp\left[e^{\lambda\Re\left[a_nX_{n}\right]}\big\vert \mathcal{F}_{n-1}\right]= \Exp\left[e^{\lambda X_{n}\Re\left[a_n\right]}\big\vert \mathcal{F}_{n-1}\right]\leq e^{\frac{\lambda^2R^2}{2}}.\nonumber
	\end{align}
	By induction, we obtain
	\begin{align}
	\Exp\left[e^{\lambda\Re\left[\sum_{j=1}^{n}a_jX_j\right]}\right]\leq e^{\frac{n\lambda^2R^2}{2}}.\nonumber
	\end{align}
	By using Markov inequality and the union bound, it follows that
	\begin{align}
	&\Pr\left[\frac{1}{n}\left|\Re\left[\sum_{j=1}^{n}a_jX_j\right]\right|>\epsilon\right]\leq 2e^{-\frac{n\epsilon^2}{2 R^2}}.\nonumber
	\end{align}
	Similarly, we have
	\begin{align}
	&\Pr\left[\frac{1}{n}\left|\Im\left[\sum_{j=1}^{n}a_jX_j\right]\right|>\epsilon\right]\leq 2e^{-\frac{n\epsilon^2}{2 R^2}}.\nonumber
	\end{align}
	Therefore, we obtain
	\begin{align}
	\Pr&\left[\frac{1}{n}\left|\sum_{j=1}^{n}a_jX_j\right|\leq \sqrt{\frac{4R^2\log\left(4/\delta\right)}{n}}\right]\nonumber\\
	&\geq \Pr\left[\frac{1}{n}\left|\Re\left[\sum_{j=1}^{n}a_jX_j\right]\right|\leq \sqrt{\frac{2R^2\log\left(4/\delta\right)}{n}}\right]\nonumber\\
	&~~~~+\Pr\left[\frac{1}{n}\left|\Im\left[\sum_{j=1}^{n}a_jX_j\right]\right|\leq \sqrt{\frac{2R^2\log\left(4/\delta\right)}{n}}\right]-1\nonumber\\
	&\geq \left(1-\frac{\delta}{2}\right)+\left(1-\frac{\delta}{2}\right)-1\geq 1-\delta,\nonumber
	\end{align}
	where the first inequality follows from the Fr\'{e}chet inequality.
\end{proof}

\section{Applications to bandit problems}
\label{app:application_to_bandit}
We briefly cover the applicability of our proposed algorithms to bandit problems (e.g., regret minimization) which is mentioned in the introduction.

A naive approach is an explore-then-commit type algorithm (cf. \cite{robbins1952some, anscombe1963sequential}).
One employs our algorithm to estimate a nearly period, followed by a certain periodic bandit algorithm such as the work in \cite{cai2021periodic} to obtain an asymptotic order of regret. 
Caveat here is, because our estimate is only an aliquot nearly period, one may need to take into account the regret caused by this misspecification when running bandit algorithms (e.g., $\rho$ and $\mu$ may lead to (small) linear regret).
Avoiding this small linear regret would require the system to be $0$-nearly periodic and that there exists a sufficiently large {\it gap} ensuring $\mu$-nearly period with sufficiently small $\mu$ implies $0$-nearly period. 

If one aims at designing an anytime algorithm, the straightforward application of our algorithms may not give near optimal asymptotic rate of {\it expected} regret because the failure probability of periodic structure estimations cannot be adjusted later.
To remedy this, one can employ our algorithm repetitively, and gradually increase the span of such procedure.
Importantly, samples from separated spans can contribute to the estimate together when the {\it surplus} beyond a multiple of period is properly dealt with.
Since failure probability decreases exponentially with respect to sample size, we conjecture that increasing the span for bandit algorithm by a certain order will lead to the same rate (up to logarithm) of {\it expected} regret of the adopted bandit algorithm.
\section{Simulation setups and results}
\label{sec:appsim}
Throughout, we used the following version of Julia \cite{bezanson2017julia}; for each experiment, the running time was less than a few minutes.
\begin{verbatim}
Julia Version 1.6.3
Platform Info:
OS: Linux (x86_64-pc-linux-gnu)
CPU: Intel(R) Core(TM) i7-6850K CPU @ 3.60GHz
WORD_SIZE: 64
LIBM: libopenlibm
LLVM: libLLVM-11.0.1 (ORCJIT, broadwell)
Environment:
JULIA_NUM_THREADS = 12
\end{verbatim}
We also used some tools and functionalities of Lyceum \cite{summers2020lyceum}.
The licenses of Julia and Lyceum are [The MIT License; Copyright (c) 2009-2021: Jeff Bezanson, Stefan Karpinski, Viral B. Shah, and other contributors:
https://github.com/JuliaLang/julia/contributors], and [The MIT License; Copyright (c) 2019 Colin Summers, The Contributors of Lyceum], respectively.

In this section, we provide simulation setups, including the details of parameter settings.

\subsection{Period estimation: LifeGame}

The hyperparameters of LifeGame environment and the algorithm are summarized in Table \ref{tab:param1}.  Note $\mu=0$ because it is a periodic transition.
Here, we used $12\times 12$ blocks of cells and we focused on the five blocks surrounded by the red rectangle in Figure \ref{fig:cadim}.
The transition rule is given by
\begin{enumerate}
	\item If the cell is alive and two or three of its surrounding eight cells are alive, then the cell remains alive.
	\item If the cell is alive and more than three or less than two of its surrounding eight cells are alive, then the cell dies.
	\item If the cell is dead and exactly three of its surrounding eight cells are alive, then the cell is revived.
\end{enumerate}

\begin{table}
	\caption{Hyperparameters used for period estimation of LifeGame.}
	\label{tab:param1}
	\centering
	\begin{tabular}{l|c||l|c}
		\toprule
		LifeGame hyperparameter & Value & Algorithm hyperparameter & Value\\
		\midrule
		height     & $12$ & accuracy for estimation $\rho$ & $0.98$ \\
		width      & $12$ & failure probability bound $\delta$     & $0.2$  \\
		observed dimension           & $5$  & maximum possible period $L_{\rm max}$    & $10$ \\
		observation noise proxy $R$ & $0.3$ &   &  \\
		ball radius $B$  & $\sqrt{5}$ &&\\
		\bottomrule
	\end{tabular}
\end{table}
\begin{figure}[t]
	\begin{center}
		\includegraphics[clip,width=0.2\textwidth]{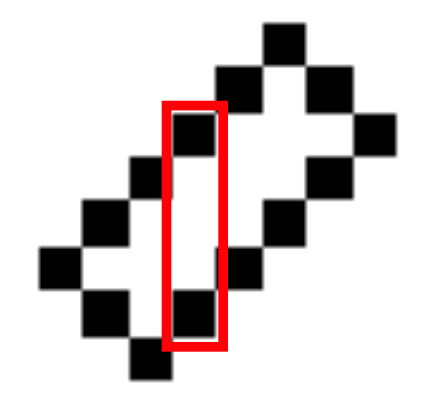}
	\end{center}
	\caption{The area we focus on for the cellular automata experiment.} 
	\label{fig:cadim}
\end{figure}

\subsection{Period estimation: Simple $\mu$-nearly periodic system}
The dynamical system
\begin{align}
r_{t+1}=\mu\left(\alpha\frac{r_t-1}{\mu}-\ceil{\alpha\frac{r_t-1}{\mu}}\right)+1,~~~~~\theta_{t+1}=\theta_t+\frac{2\pi}{L},\nonumber
\end{align}
is $\mu$-nearly periodic.
See Figure \ref{fig:mu} for the illustrations when $\mu=0.2,~L=5,~\alpha=\pi$.
It is observed that there are five {\it clusters}.
We mention that this system is not exactly periodic.
The hyperparameters of this system and the algorithm are summarized in Table \ref{tab:param2}.
\begin{table}
\caption{Hyperparameters used for $\mu$-nearly periodic system.}
\label{tab:param2}
\centering
\begin{tabular}{l|c||l|c}
	\toprule
	System hyperparameter & Value & Algorithm hyperparameter & Value\\
	\midrule
	dimension     & $2$ & accuracy for estimation $\rho$ & $0.3$ \\
	$\mu$      & $0.001$ & failure probability bound $\delta$ & $0.2$ \\
	$\alpha$ & $\pi$  & maximum possible nearly period $L_{\rm max}$    & $8$ \\
	observation noise proxy $R$ & $0.3$ &&\\
	ball radius $B$  & $2$ &&\\
	\bottomrule
\end{tabular}
\end{table}
\begin{figure}[t]
	\begin{center}
		\includegraphics[clip,width=0.7\textwidth]{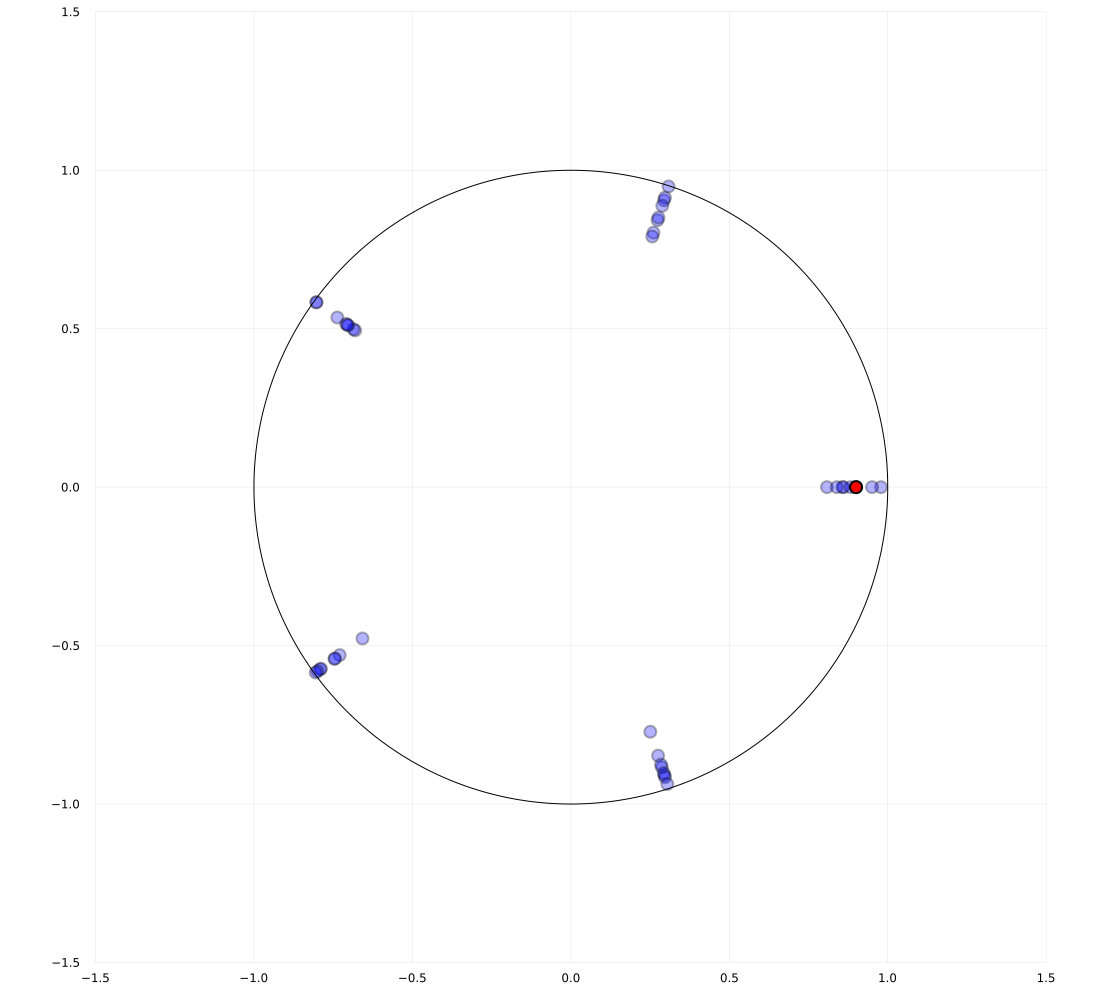}
	\end{center}
	\caption{An example of $\mu$-nearly periodic system.} 
	\label{fig:mu}
\end{figure}

\subsection{Eigenvalue estimation}

We used the matrix $M$ given by
\begin{align}
M:=\left(
\begin{array}{ccccc}
0&0&0&1&0\\
0&1&0&0&0\\
1&0&0&0&0\\
0&0&1&0&0\\
0&0&0&0&0.7\\
\end{array}
\right).\nonumber
\end{align}
The first $4\times 4$ block matrix is for permutation.
After $N$ steps, it is expected that the last dimension shrinks so that the system becomes nearly periodic.
It follows that $4!=24$ is a multiple of the length $L$.
Eigenvalues of $M^5$ are given by $1.000,~1.000,~-0.500-0.866i,~-0.500+0.866i,~0.168$,
and the $(\theta_0, 5)$-distinct eigenvalues are $1.000,~-0.500-0.866i,~-0.500+0.866i$.

The hyperparameters of the environment and the algorithm are summarized in Table \ref{tab:param3}.
Note we don't necessarily need $\kappa$, $B$, and $\Delta$ to run the algorithm as long as the effective sample size is sufficiently large; we used the values (satisfying the conditions) in Table \ref{tab:param3} for simplicity.

\begin{table}[h]
	\caption{Hyperparameters used for eigenvalue estimation.}
	\label{tab:param3}
	\centering
	\begin{tabular}{l|c||l|c}
		\toprule
		Hyperparameter & Value & Hyperparameter & Value\\
		\midrule
		$\kappa$     & $6$ & a nearly period $L$    & $24$ \\
		$\Delta$      & $0.1$ & failure probability bound $\delta$     & $0.2$  \\
		dimension           & $5$  &observation noise proxy $R$ & $0.3$ \\
		ball radius $B$  & $1$ &&\\
		\bottomrule
	\end{tabular}
\end{table}

\subsection{Realistic simulation}
\label{app:realisticsimsetup}
The hyperparameters used for period estimation that outputs an invalid estimate are summarized in Table \ref{tab:realisticparam1} while those for a reasonable output are summarized in Table \ref{tab:realisticparam2}.
Also, the hyperparameters used for the eigenvalue estimation are summarized in Table \ref{tab:realisticparam3}.
Note for all of them, we used the fixed number of sample sizes.
\begin{table}[h]
	\caption{Hyperparameters used for worm simulation that outputs incorrect value.}
	\label{tab:realisticparam1}
	\centering
		\begin{tabular}{l|c||l|c}
		\toprule
		System hyperparameter & Value & Algorithm hyperparameter & Value\\
		\midrule
		dimension     & $36$ & accuracy for estimation $\rho$ & $1.0$ \\
		sample number $T_p$& $3\times 10^5$ & maximum possible nearly period $L_{\rm max}$    & $25$ \\
		observation noise proxy $R$ & $0$ &&\\
		\bottomrule
	\end{tabular}
\end{table}
\begin{table}[h]
	\caption{Hyperparameters used for worm simulation that outputs reasonable value.}
	\label{tab:realisticparam2}
	\centering
	\begin{tabular}{l|c||l|c}
		\toprule
		System hyperparameter & Value & Algorithm hyperparameter & Value\\
		\midrule
		dimension     & $36$ & accuracy for estimation $\rho$ & $20.0$ \\
	sample number $T_p$& $3\times 10^5$ & maximum possible nearly period $L_{\rm max}$    & $25$ \\
		observation noise proxy $R$ & $0$ &&\\
		\bottomrule
	\end{tabular}
\end{table}
\begin{table}[!h]
	\caption{Hyperparameters used for eigenvalue estimation of worm simulation.}
	\label{tab:realisticparam3}
	\centering
	\begin{tabular}{l|c||l|c}
		\toprule
		Hyperparameter & Value & Hyperparameter & Value\\
		\midrule
	sample number $N$  & $10^3,10^4,10^5$ & failure probability bound $\delta$     & $0.2$  \\
		dimension           & $36$  &observation noise proxy $R$ & $1.0$ \\
			a nearly period $L$    & $20$ &&\\
		\bottomrule
	\end{tabular}
\end{table}

\end{document}